\DeclareMathSymbol{\Minus}{\mathbin}{AMSa}{"39}
\def\Equal{\texttt{=}}
\def\Plus{\texttt{+}}
\newcommand{\In}{\mbox{$\in$}}
\theoremstyle{remark} 
\newtheorem{remark}{Remark}
\tikzstyle{startstop} = [rectangle, rounded corners, 
\tikzstyle{process} = [rectangle, 
\tikzstyle{arrow} = [thick,->,>=stealth]
\tikzset{sin v source/.style={circle,draw,append after command={
    \pgfextra{
    \draw
      ($(\tikzlastnode.center)!0.5!(\tikzlastnode.west)$)
       arc[start angle=180,end angle=0,radius=0.425ex] 
      (\tikzlastnode.center)
       arc[start angle=180,end angle=360,radius=0.425ex]
      ($(\tikzlastnode.center)!0.5!(\tikzlastnode.east)$) 
    ;}},scale=2,}}
\tikzset{w source/.style={circle,draw,scale=1}}
\renewcommand{\nomgroup}[1]{%
  \item[%
    \ifthenelse{\equal{#1}{A}}{A. \textit{Sets}}{}%
    \ifthenelse{\equal{#1}{B}}{B. \textit{Parameters}}{}%
    \ifthenelse{\equal{#1}{C}}{C. \textit{Variables}}{}
    ]%
    \vspace{10pt}\hspace*{-\leftmargin}\vspace{10pt}%
}
\newtheorem{theorem}{Theorem}
\newtheorem{proposition}{Proposition}
\theoremstyle{plain}
\begin{document}
%
\title{Electricity Market-Clearing With Extreme Events} 
%
\author{Tom\'as~Tapia, 
    Zhirui~Liang,
    Charalambos~Konstantinou,Yury~Dvorkin
}

\maketitle

\begin{abstract}
Extreme events jeopardize power network operations, causing beyond-design failures and massive supply interruptions. Existing market designs fail to internalize and systematically assess the risk of extreme and rare events. Efficiently maintaining the reliability of renewable-dominant power systems during extreme weather events requires co-optimizing system resources, while differentiating between large/rare and small/frequent deviations from forecast conditions. To address this gap in both research and practice, we propose managing the uncertainties associated with extreme weather events through an additional reserve service, termed extreme reserve. The procurement of extreme reserve is co-optimized with energy and regular reserve using a large deviation theory chance-constrained (LDT-CC) model, where LDT offers a mathematical framework to quantify the increased uncertainty during extreme events. To mitigate the high additional costs associated with reserve scheduling under the LDT-CC model, we also propose an LDT model based on weighted chance constraints (LDT-WCC). This model prepares the power system for extreme events at a lower cost, making it a less conservative alternative to the LDT-CC model. The proposed market design leads to a competitive equilibrium while ensuring cost recovery. Numerical experiments on an illustrative system and a modified 8-zone ISO New England system highlight the advantages of the proposed market design.

\end{abstract}


\IEEEpeerreviewmaketitle


\section{Introduction}\label{Sec:Introduction}
\subsection{Motivation and Scope}
\IEEEPARstart{R}ARE and extreme events are situations that occur with a low probability but can lead to catastrophic system impacts, provoking cascading blackouts and affecting both the economy and society \cite{PANTELI2015259}. For example, the Federal Energy Regulatory Commission reports that during the extreme cold winter storm in Texas in February 2021, cold temperatures severely impacted power generation capacity, leading to energy shortages and causing damages estimated between 80 to 130 billion dollars \cite{ferc2021february}. Also, during the three-day storm, wholesale electricity prices often surged to the offer price cap of $\$$9,000/MWh \cite{LEVIN20221}. After this episode, the Public Utility Commission of Texas lowered the offer price cap from $\$$9,000/MWh to $\$$5,000/MWh and imposed strict weatherization standards on generation and natural gas companies \cite{TexasSB3}, \cite{PUCT_2021}. While seemingly beneficial to consumers, the offer price cap measure may cause market participants and investors to perceive electricity prices as lower than they would be in a fully risk-complete market. This misalignment between private (investor) and social (system) risk attitudes may exacerbate the missing money problem. In the long term, this measure may also negatively impact resource adequacy, leading to insufficient capacity investment to ensure system reliability during future extreme events \cite{mays2022private}.

Other weather events such as wildfires, heatwaves and hurricanes, dunkelflaute phenomena, and cyber-attacks are also classified as extreme weather events and are shown to affect electricity market outcomes \cite{avraam2023operational}.These extreme events affect various components of the power grid, including power generation capacity \cite{ke2016quantifying}, transmission capacity \cite{karimi2018dynamic}, and energy storage capacity \cite{ma2018temperature}. However, a common characteristic of these events is their unpredictability, especially for power system look-ahead scheduling. We refer to an \textit{extreme event} as a singular instance of uncertainty in the power system, defined by its magnitude, location, and duration, with the potential to cause significant disruptions to system operation \cite{mujjuni2023evaluation}. 

Current electricity markets set reserve requirements exogenously and then enforce them in scheduling routines without an explicit treatment of extreme events. As a result, these (often heuristic) reserve rules do not cover extreme events or prioritize resources for rare or large deviations from forecast or design conditions, resulting in risk-incomplete market outcomes. This incompleteness, in turns, leads to widespread outages and costly operating regimes and inadequate dispatch and price signals to market participants. Accounting for extreme events within market mechanisms is crucial given the expected increase in the frequency, intensity, and duration of these events due to climate change \cite{PANTELI2015259}.

Recent studies \cite{mays2023financial,mays2022private,billimoria2023contract,bienstock2024risk} have advanced the understanding of financial risk management and resilience under extreme conditions, emphasizing the need to adjust pricing and market mechanisms to address revenue volatility and the costs of managing extreme risks. Specifically, \cite{mays2023financial} points out that decentralized markets often face under-investment in resilience due to market frictions. The authors in \cite{mays2022private} and \cite{billimoria2023contract} noted the importance of regulatory frameworks and financial tools, such as forward contracts and risk-sharing instruments, to promote investment in resources needed to cope with extreme events without distorting price signals or discouraging competition. Additionally, \cite{bienstock2024risk} emphasizes the need to reduce  customer exposure to price volatility which is often driven by extreme weather patterns.

This paper addresses the system's lack of preparedness for rare and extreme events by introducing a new reserve service, which we refer to as \textit{extreme reserve}. In contrast, we refer to margins (e.g., load following or regulation) scheduled in traditional look-ahead markets (e.g., day-ahead, intra-day or hour-ahead) as \textit{regular reserve} \cite{khatami2019flexibility}. To this end, we develop a chance-constrained (CC) system scheduling model and use the large deviation theory (LDT) to capture the significant uncertainty posed by extreme events caused by weather-dependent renewable generation resources. The resulting  LDT-CC model enables the scheduling of both extreme and regular reserves while deriving the marginal prices for energy and reserve services. These prices lead to a competitive market equilibrium.

\subsection{Literature Review}
Over the past few decades, optimization techniques for managing uncertainty in power systems scheduling and market clearing have evolved rapidly \cite{roald2023power}, including stochastic programming \cite{papavasiliou2011reserve}, robust optimization techniques \cite{lorca2014adaptive}, chance (probabilistic) constraints, and distributionally robust optimization \cite{bienstock2014chance}. Still, the current industry practice remains largely deterministic and aims to cope with growing uncertainty through incremental improvements, thereby increasing complexity and opaqueness of operating procedures and software \cite{hobbs2019three}. In contrast, stochastic electricity market designs make it possible to internalize uncertainty and provide efficient market signals. 

These market designs achieve competitive equilibrium under various uncertainty factors and assumptions, with market signals—primarily derived from prices—playing a critical role in market clearing mechanisms to align private and social risk perspectives \cite{mays2022private}. However, scenario-based stochastic programming faces significant limitations for market clearing routines due to scenario dependency and computational barriers \cite{papavasiliou2014applying}. It requires the use of \textit{nontransparent} scenario selection techniques and scenario weighting to avoid biasing the results \cite{papavasiliou2013multiarea}, as well as the inability to accurately predict scenarios for extreme events \cite{tong2022optimization}. Alternatively, robust and distributionally robust optimization can capture extreme events but typically lead to \textit{overly conservative} solutions, resulting in suboptimal asset- and system-level operations. 

Chance constraints (CCs) are a reliable method to manage and price resources effectively, addressing risks by employing (often) affine control policies to determine the necessary reserve capacity in response to a priori postulated uncertainty \cite{bienstock2014chance}. CCs also  position the system to cope with anticipated uncertainty realizations by limiting constraint violations to only a small fraction of the time \cite{bienstock2014chance}. This method has been extended further to robust CCs \cite{lubin2015robust}, distributionally robust CCs \cite{xie2017distributionally}, and used for endogenous electricity pricing \cite{dvorkin2019chance,kuang2018pricing,fang2019introducing,mieth2020risk,wernerpricing,liang2022inertia}. Despite their strengths, CCs are generally indifferent to the explicit risk associated with the impact or size of constraint violations, particularly overlooking the risk of \textit{large} or \textit{rare} deviations.  This can lead to risk-incomplete solutions that are both costly and ineffective in managing extreme events.

Notably, \cite{roald2015optimal} introduces weighted chance constraints (WCC) with general (non-affine) and, importantly for pricing, convex control policies that differentiate the response of generators between large and small deviations due to uncertainty. Related to \cite{roald2015optimal}, \cite{porras2023integrating} presents a sample-based model for calculating additional manual reserves. However, these approaches yield an NP-hard problem and require approximations to be computed efficiently. Such  approximations have been studied in recent literature, e.g, \cite{jiang2022also,hanasusanto2017ambiguous,nemirovski2007convex,ahmed2018relaxations}. Still, these approximations complicate the solving process due to the use of non-convex or scenario-based methods, particularly when trying to account for rare events. 

Typical events occur with a relatively high probability and are described by the Law of Large Numbers or the Central Limit Theorem, which explain how averages of random variables converge to their expected value. Extreme or rare events, however, deviate significantly from this expected value, and Large Deviation Theory (LDT) offers tools to estimate the probability of such deviations \cite{dematteis2019extreme}. 
The central element in LDT is the rate function, denoted by $I(x)$, where $x$ is a possible outcome of a random process. The rate function governs the exponential decay of the probability of rare events, capturing how the likelihood of different deviations from typical behavior decreases. It is a non-negative, convex function that reaches its minimum at the most likely outcomes.

We first interpret the rate function in mathematical terms. 
Consider a sequence of random variables $\{X_n\}$ in a stochastic process or time series, where $n$ represent time step. According to  LDT, the probability that $X_n$ takes values in a rare event set $\mathcal{A}$ decays exponentially as $n$ increases. More formally, LDT seeks to approximate $\mathbb{P}(X_n \in \mathcal{A})$ by
$\exp\left(-n \inf_{x \in \mathcal{A}} I(x)\right)$ as $n \to \infty$. The quantity $\inf_{x \in \mathcal{A}} I(x)$ represents the ``most probable'' way for the system to deviate into  set $\mathcal{A}$. If $\mathcal{A}$ contains  point $x^* =\arg\min I(x)$, then $x^*$ is the most likely large deviation, referred to as the \textit{dominating point}. This dominating point is essential in estimating the probability of rare events.

We then provide an intuitive explanation of the dominating point in the context of power systems. Consider a thermal generator with upper and lower limits on its output. Under normal conditions, this generator operates within a safe range, but fluctuations in renewable power generation may cause it to briefly operate outside this range during real-time dispatch. Among all the possible ways the generator can deviate from its safe operating region, some deviations are more likely than others. The dominating point refers to the most probable way that deviation occurs.

The notion of the dominating point is useful for quantifying reserve requirements for rare and extreme events in power systems. Although multiple events could push the system out of its safe operating region, we only need to focus on the most critical scenario — the one with the highest probability of driving the system into the critical operating region. By scheduling extreme reserve for this critical scenario, and controlling the probability of constraint violations, we obtain a Large Deviation Theory-based Chance Constraint (LDT-CC).

The LDT-CC proposed in \cite{tong2022optimization} introduced a \textit{sample-free} approach for quantifying and incorporating rare and extreme events, addressing the computational challenges posed by traditional methods that rely on intensive sampling. This approach leads to a bi-level optimization formulation that is independent of the rarity of the event. To handle the bi-level structure, the lower-level problem is replaced with its first-order optimality conditions, resulting in a convex model that can be efficiently solved by off-the-shelf optimization solvers.

Fig.~\ref{fig:IntroProbCC} compares the risk hedging strategies of a regular CC and an LDT-CC under uncertainty $\boldsymbol{\Omega}$. The red area represents deviations occurring during regular scenarios, where the cumulative probability ($1-\epsilon$) is hedged by the regular CC. In contrast, the yellow area corresponds to extreme scenarios, where the LDT-CC robustly hedges the uncertainty of extreme events by only using a dominating point ($\Omega^*$) to characterize the rare event set in the tail of the distribution ($\epsilon$). The use of $\Omega^*$ simplifies the analysis of extreme events by reducing the problem to a single point that captures the system's essential behavior, minimizing complexity and computational burden \cite{tong2021extreme}. For comparison, we include the Value-at-Risk (VaR) and Conditional Value-at-Risk (CVaR) metrics \cite{nemirovski2007convex} in the figure.
\begin{figure}[!t]
    \centering\includegraphics[width=0.3\textwidth]{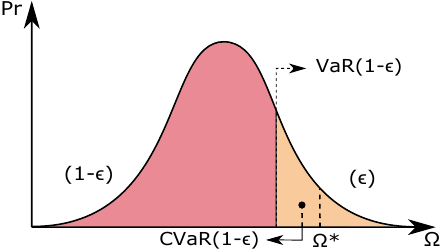}
    \caption{Operation regimes under uncertainty with regular (red) and LDT (yellow) chance constraints.}
    \label{fig:IntroProbCC}
\end{figure}

\subsection{Contributions}
Our contributions in this paper include:
\begin{itemize}
\item We extend the chance-constrained pricing approaches from \cite{dvorkin2019chance, mieth2020risk, kuang2018pricing} by incorporating LDT-CC into an economic dispatch (ED) model. The LDT-CC-ED model effectively manages uncertainty arising from rare and extreme events in the market-clearing process. Its convexity with respect to power generation variables enables efficient solving using standard optimization solvers.
\item We aim to relax the extreme reserve scheduling problem by imposing less stringent risk hedging requirements during extreme events. To this end, we propose a LDT weighted-chance constrained ED model (LDT-WCC-ED) to achieve a less conservative extreme reserve scheduling, reducing operational costs while maintaining acceptable reliability under extreme conditions.
\item We achieve market clearing using the proposed LDT-CC-ED and LDT-WCC-ED models, deriving the marginal prices for energy, regular reserve, and extreme reserve. We also demonstrate that the resulting market clearing is efficient and establishes a competitive equilibrium.
\end{itemize}


\nomenclature[A]{$\mathcal{N}$}{Set of controllable generating units, indexed by $n$.}%
\nomenclature[A]{$\mathcal{W}$}{Set of wind generators, indexed by $n'$.}%
\nomenclature[A]{$\mathcal{W}_i$}{Set of wind generators at node $i$.}%
\nomenclature[A]{$\mathcal{I}$}{Set of buses, indexed by $i$.}%
\nomenclature[A]{$\mathcal{L}$}{Set of lines, indexed by $(j,k)$ that represent the nodes connected.}%
\nomenclature[A]{$\mathcal{I'}$}{Set of buses with wind generators, indexed by $i$.}%
\nomenclature[A]{$\mathcal{N}_i$}{Set of controllable generators at node $i$.}%
\nomenclature[A]{$\mathcal{L}_i^{+}$}{Set of lines with pair of nodes $(i,j)$.}%
\nomenclature[A]{$\mathcal{L}_i^{-}$}{Set of lines with pair of nodes $(j,i)$.}%
\nomenclature[B]{$p_{n}^{\max}$}{Maximum output limit of unit $n$ [MW].}%
\nomenclature[B]{$p_{n}^{\min}$}{Minimum output limit of unit $n$ [MW].}%
\nomenclature[B]{$f_{jk}^{\max}$}{Maximum flow limit of line between the nodes $j$ and $k$ [MW].}%
\nomenclature[B]{$d_i$}{Demand at node $i$ [MW].}%
\nomenclature[B]{$D$}{Aggregated demand [MW].}%
\nomenclature[B]{$\epsilon_n$}{Probability of the output of the unit $n$
exceeding the max/min limit [$\%$].}%
\nomenclature[B]{$\hat{W}$}{Aggregated wind forecast generation [MW]}%
\nomenclature[B]{$\hat{W}_i$}{Aggregated wind forecast generation at node $i$ [MW]}%

\nomenclature[C]{$p_{n}$}{Power output dispatch of unit $n$ [MW].}%
\nomenclature[C]{$\alpha_{n}$}{Participation factor of unit $n$ for the total wind deviation $\Omega$ [$\%$].}%
\nomenclature[C]{$\beta_{n}$}{Participation factor of unit $n$ for the large wind deviation $\Omega^*$ [$\%$].}%
\nomenclature[C]{$\Omega_i$}{Aggregated wind forecast error at node $i$ [MW].}%
\nomenclature[C]{$\Omega$}{Aggregated wind forecast error [MW].}%
\nomenclature[C]{$\Omega^*$}{LDT minimizer for the wind uncertainty realization $\Omega$ [MW].}%
\nomenclature[C]{$\Omega^*_i$}{LDT minimizer for the wind uncertainty realization $\Omega_i$ at node $i$ [MW].}%
\nomenclature[C]{$\lambda^*_n$}{Optimal dual variable of the lower-level LDT problem for the unit $n$.}%
\nomenclature[C]{$f_{jk}$}{Power flow in the line that connect the nodes $jk$ [MW].}%
\nomenclature[C]{$\theta_i$}{DC simplification angle at node $i$.}%
\nomenclature[C]{$\delta_{n}(\Omega)$}{general control law of generator $n$ for the total deviation of wind farms $\Omega$.}%
\nomenclature[C]{$g_{n}(\Omega)$}{general control law of generator $n$ for the total deviation of wind farms $\Omega$.}%

\section{Benchmark Models} \label{Sec:Benchmark_Models}
This section first reviews the chance-constrained economic dispatch (CC-ED) model in Section~\ref{Subsection:CC}. This model ensures power balance in the system while limiting the rate of constraint violations under uncertainty. Flexible resources provide regular reserve to mitigate the risk of uncertainty, following an affine control policy. In Section~\ref{Section:WCC}, we extend CCs to weighted chance constraints (WCCs) by assigning different weights to different magnitudes of constraint violations. We use a linear weight function and a piece-wise linear control policy in the WCC-ED model. The CC-ED model will serve as a benchmark for the proposed models in Section~\ref{Sec:Proposed_Models}, while the  WCC-ED model will provide insights for the proposed model in Section~\ref{Subsection:LDT-WCC}.

\subsection{CC-ED: Chance-Constrained Economic Dispatch Model} \label{Subsection:CC}

We consider wind power as the sole source of uncertainty and thermal generators as the sole flexible resource for managing wind power fluctuations. However, the model can be extended to include other uncertainty sources, such as electricity demand or contingencies, as well as other flexible resources, such as
energy storage.

We denote the uncertain output of wind farm $n'$ as $w_{n'} = \hat{w}_{n'} + \boldsymbol{\omega}_{n'}$, where the deterministic value $\hat{w}_{n'}$ is the forecasted power, and the random variable $\boldsymbol{\omega}_{n'}$ is the forecast error. At the system level, the aggregated wind power forecast is $\hat{W} = \sum_{n' \in \mathcal{W}} \hat{w}_{n'}$, and the aggregated forecast error is $\boldsymbol{\Omega} = \sum_{n' \in \mathcal{W}} \boldsymbol{\omega}_{n'}$. Accordingly, the output of generator $n$ can be modeled as $g_{n}(\boldsymbol{\Omega}) = p_n + \delta_{n}(\boldsymbol{\Omega})$, where the deterministic variable $p_n$ is the scheduled power generation under $\hat{W}$, and the random component $\delta_{n}(\boldsymbol{\Omega})$ is the reserved flexible capacity from generator $n$ to balance the uncertainty $\boldsymbol{\Omega}$. Following \cite{bienstock2014chance,dvorkin2019chance,kuang2018pricing}, we formulate the CC-ED model as:
\begin{subequations} \label{ModelCC}
    \begin{align} 
    \min_{p, g, \delta} ~& \mathbb{E}_{\boldsymbol{\boldsymbol{\Omega}}} \big[ \sum_{n\in\mathcal{N}} C_{n} (g_n(\boldsymbol{\Omega})) \big] \label{ModelCC:OF}\\
    \text{s.t.} \quad & ~ p_n \geq 0 &\forall n \label{ModelCC:pn}\\
    &~ g_{n}(\boldsymbol{\Omega}) = p_n + \delta_{n}(\boldsymbol{\Omega}) &\forall n \label{ModelCC:Policy}\\
    &~ \mathbb{P}_{\boldsymbol{\Omega}} \big[g_{n}(\boldsymbol{\Omega}) \leq p_{n}^{\max}\big] \geq 1 - \epsilon_n &\forall n \label{ModelCC:MaxLimit}\\
    &~ \mathbb{P}_{\boldsymbol{\Omega}} \big[ p_{n}^{\min} \leq g_{n}(\boldsymbol{\Omega})\big] \geq 1 - \epsilon_n &\forall n \label{ModelCC:MinLimit}\\
    & \sum_{n \in \mathcal{N}} p_n = D - \hat{W} \label{ModelCC:EnergyBalance}\\
    & \sum_{n \in \mathcal{N}} \delta_{n}(\boldsymbol{\Omega}) = \boldsymbol{\Omega}, \label{ModelCC:ReserveBalance}
\end{align}
\end{subequations}
where objective \eqref{ModelCC:OF} minimizes the expected power generation cost under uncertainty, and \eqref{ModelCC:pn} defines the feasible region of scheduled generation $p_n$. Equation  \eqref{ModelCC:Policy} models generator output under uncertainty, with generation limits imposed as CCs in \eqref{ModelCC:MaxLimit} and \eqref{ModelCC:MinLimit}, limiting the constraint violation rate for generator $n$ to $\epsilon_n$. Constraint \eqref{ModelCC:EnergyBalance} enforces power balance under the forecasted wind power $\hat W$, and \eqref{ModelCC:ReserveBalance} represents reserve deployment in response to wind power fluctuations. Function $\delta_n(\boldsymbol{\Omega})$ describes how generator $n$ responds to wind power fluctuations. In this model, we assume an affine control policy $\delta(\boldsymbol{\Omega}) = \alpha_n \boldsymbol{\Omega}$, where $\alpha_n \in [0,1]$ is the participation factor of generator $n$. Under this policy, \eqref{ModelCC:ReserveBalance} can be simplified to $\sum_{n\in \mathcal{N}} \alpha_n = 1$.

We assume the forecast error $\boldsymbol{\Omega}$ follows a Gaussian distribution, i.e., $\boldsymbol{\Omega} \sim \mathcal{N}(\mu_{\Omega},\sigma_{\Omega}^2)$, and $\mu_{\Omega} = 0$, meaning the forecast error does not have a systematic bias. By applying the convex reformulation method in \cite{bienstock2014chance}, the CCs in \eqref{ModelCC:MaxLimit} and \eqref{ModelCC:MinLimit} can reformulated as:
\begin{align} \label{CC_reform}
     p_{n}^{\min} + \alpha_n \hat{\sigma}_n \leq p_n \leq p_{n}^{\max} - \alpha_n \hat{\sigma}_n, & \quad \forall n
\end{align}
where $\hat{\sigma}_n= \Phi^{-1}(1-\epsilon_n) \sigma_{\Omega}$ is a given parameter, and $\Phi^{-1}(\cdot)$ is the inverse cumulative distribution. This CC-ED model will serve as a benchmark for the proposed models in Section~\ref{Sec:Proposed_Models}.

\subsection{From CCs to Weighted Chance Constraints (WCCs)} \label{Section:WCC}
The CC-ED model in \eqref{ModelCC} has two drawbacks. First, it assumes affine control policies, $g_{n}(\boldsymbol{\Omega}) = p_n + \alpha_n \boldsymbol{\Omega}$, which may not always be optimal. Second, the model does not distinguish between large and small constraint violations, even though they correspond to different levels of risk. As proposed in \cite{roald2015optimal}, 
we can extend the CCs in \eqref{ModelCC:MaxLimit} and \eqref{ModelCC:MinLimit} to weighted chance constraints (WCCs), which allow for more flexible control policies beyond affine ones and enable differentiated responses based on the magnitude of uncertainty realizations. 

A general WCC takes the following form:
\begin{equation} \label{WCC}
    \int_{-\infty}^{\infty} f(y(\boldsymbol{\Omega}))\mathcal{P}(\boldsymbol{\Omega}) d\boldsymbol{\Omega} \leq \epsilon,
\end{equation}
where $\mathcal{P}(\boldsymbol{\Omega})$ is the probabilistic distribution of the uncertain variable $\boldsymbol{\Omega}$, and $y(\boldsymbol{\Omega})$, referred to as the overload component, quantify the magnitude of the constraint violation. For example, the overload component for \eqref{ModelCC:MaxLimit} is $y_n(\boldsymbol{\Omega}) = g_n(\boldsymbol{\Omega}) - p_n^{\max}$, where $y_n>0$ indicates a violation of the maximum power limit for generator $n$, while $y_n \le 0$ implies a safe operating region. Finally, $f(\cdot)$ is a weight function that evaluates the risk related to the overload, so $f(y)$ is nonzero only when $y > 0$. If $f(y)$ is the unit step function $\chi(y> 0)$, i.e., $f(y)=0$ for $y < 0$ and $f(y)=1$ for $y \ge 0$, the WCC in \eqref{WCC} becomes a standard CC.

In this paper, we use the linear weight function $f(y) =y \chi(y > 0)$. Under affine policies, \eqref{WCC} can be reformulated as:
\begin{equation} \label{WCC-2}
    \int_{-\infty}^{\infty} y \chi(y > 0) \mathcal{P}(\boldsymbol{\Omega}) d\boldsymbol{\Omega} = \int_{0}^{\infty} y \mathcal{P}(y) dy \leq \epsilon.
\end{equation}
This reformulation changes the variable of integration from $\boldsymbol{\Omega}$ to $y$, which is also a random value. Since $\boldsymbol{\Omega}$ follows a Gaussian distribution and $y(\boldsymbol{\Omega})$ is a linear transformation of $\boldsymbol{\Omega}$, $y(\boldsymbol{\Omega})$ remains Gaussian, i.e., $y_n \sim \mathcal{N}(\Tilde{\mu}_n,\Tilde{\sigma}_n^2)$, where the mean $ \Tilde{\mu}_n$ and variance $\Tilde{\sigma}_n$ can be derived from $\mu_{\Omega}$ and $\sigma_{\Omega}$, as detailed in \cite{roald2015optimal}. Therefore, we can reformulate \eqref{WCC} using the expectation of a truncated Gaussian distribution:
\begin{align} \label{WCC-linear}
    \Tilde{\mu} \Big(1 \Minus \Phi\Big( \frac{\Minus\Tilde{\mu}}{\Tilde{\sigma}}\Big)\Big) + \frac{\Tilde{\sigma}}{\sqrt{2\pi}} e^{\frac{\Minus 1}{2} \big( \frac{\Minus\Tilde{\mu}}{\Tilde{\sigma}} \big)^2 }\leq \epsilon,
\end{align}

Accordingly, we can extend the CCs in \eqref{ModelCC:MaxLimit} and \eqref{ModelCC:MinLimit} to the following WCCs model:
\begin{subequations} \label{ModelWCC}
\begin{align}
     & \Tilde{\mu}_n^{\max} \big(1\Minus \Phi\big(z^{\max}_{n}\big)\big) \Plus \frac{\Tilde{\sigma}_n^{\max}}{\sqrt{2\pi}} e^{\frac{\Minus1}{2} \big(z^{\max}_{n}\big)^2}\leq \epsilon_n \label{ModelWCC:MaxLimit}\\
     &\Tilde{\mu}_n^{\min} \big(1\Minus\Phi\big( z^{\min}_{n}\big)\big) \Plus \frac{\Tilde{\sigma}_n^{\min}}{\sqrt{2\pi}} e^{ \frac{\Minus1}{2} \big(z^{\min}_{n}\big)^2} \leq \epsilon_n \label{ModelWCC:MinLimit}
\end{align}
\end{subequations}
where $z^{\max}_{n} = {\Minus\Tilde{\mu}_n^{\max}}/{\Tilde{\sigma}_n^{\max}}$ and $z^{\min}_{n} = {\Minus\Tilde{\mu}_n^{\min}}/{\Tilde{\sigma}_n^{\min}}$. 
With the assumption that $\mu_{\Omega} = 0$, we have $\Tilde{\mu}_n^{\max} = p_n - p_n^{\max}$, $\Tilde{\mu}_n^{\min} = p_n^{\min} - p_n$, and $(\Tilde{\sigma}_n^{\max})^2 = (\Tilde{\sigma}_n^{\min})^2 = \alpha_n^2 \sigma_{\Omega}^2$. Note that $\Tilde{\mu}_n^{\max}$, $\Tilde{\mu}_n^{\max}$, $\Tilde{\sigma}_n^{\max}$ and $\Tilde{\sigma}_n^{\min}$ are variables rather than parameters since they contain decision variables $p_n$ and $\alpha_n$.

The WCC in \eqref{WCC} also enables the incorporation of non-affine control policies, which are more realistic for power system operations. Since larger constraint violations typically carry greater weights, generators can respond to larger wind power fluctuations with greater intensity. As a result, the generator control policy can be modified from the original affine policy $g_{n}(\boldsymbol{\Omega}) = p_n + \delta_{n}(\boldsymbol{\Omega})$ to a piece-wise linear policy as follows:
\begin{equation} \label{piece-wise linear policy}
    {g}_{n}(\boldsymbol{\Omega})=
    \begin{cases}
        p_n + \delta_n^{-}(\boldsymbol{\Omega}) , ~\quad ~~ \boldsymbol{\Omega} \leq \Omega_{\epsilon} \\
        p_n + \delta_n^{+}(\boldsymbol{\Omega}) , ~\quad ~~ \Omega_{\epsilon}  < \boldsymbol{\Omega}
    \end{cases}
\end{equation}
where $\delta_n^{+}(\cdot)$ and $\delta_n^{-}(\cdot)$ are the control functions for each region of the piece-wise affine policy, and $\Omega_{\epsilon} $ is the threshold value of $\boldsymbol{\Omega}$ where the generator's control policy changes. Generally, $\delta_n^{+}(\cdot)$ has a steeper slope than $\delta_n^{-}(\cdot)$. Under the policy in \eqref{piece-wise linear policy}, the WCC for \eqref{ModelCC:MaxLimit} can be written as:
\begin{align} \label{WCC-piece-wise-policy}
    &\int_{-\infty}^{\Omega_{\epsilon} } \int_{0}^{\infty} y_n \mathcal{P}(y_n|\boldsymbol{\Omega}) \mathcal{P}(\boldsymbol{\Omega}) dy_n d\boldsymbol{\Omega} \nonumber \\
   &+ \int_{\Omega_{\epsilon} }^{\infty} \int_{0}^{\infty} y_n \mathcal{P}(y_n|\boldsymbol{\Omega})\mathcal{P}(\boldsymbol{\Omega}) dy_n d\boldsymbol{\Omega}
    \leq \epsilon_n,
\end{align}
where the random variable $y_n|\boldsymbol{\Omega}$ follows a normal distribution, with its mean and variance depending on which of the two regions in \eqref{piece-wise linear policy} the total wind deviation $\boldsymbol{\Omega}$ falls into. We will provide a more detailed discussion on how to quantify these mean and variance in Section~\ref{Subsection:LDT-WCC} when we introduce the LDT-WCC formulations.

A key feature of WCCs is that they remain convex under general control policies, provided the weight function $f(\cdot)$ is convex (Theorem 1, \cite{roald2015optimal}). This convexity enables the derivation of globally optimal control policies and pricing strategies based on a WCC model.

\section{Proposed Formulations under Extreme Events} 
\label{Sec:Proposed_Models}
Although the CC-ED model in \eqref{ModelCC} incorporates regular wind power uncertainty into the reserve scheduling process, it does not adequately capture the risk of large deviations. To address the need for distinguishing between small and large deviations and accounting for the probabilities of rare or extreme events, in Section~\ref{Subsection:LDT-CC}, we extend CC-ED to incorporate LDT-CCs for extreme reserve scheduling, resulting in the LDT-CC-ED model. In Section~\ref{Subsection:LDT-CC-solution}, we reformulate the LDT-CC-ED model into a computationally tractable single-level optimization problem. To reduce conservatism in extreme reserve scheduling and save operational costs, Section~\ref{Subsection:LDT-WCC} introduces an LDT-WCC-ED model as a relaxation of LDT-CC-ED. Finally, in Section~\ref{Subsection:LDT-WCC-solution}, we reformulate LDT-WCC-ED to a bi-linear optimization problem. We then present a numerical solution method based on the cutting-plane algorithm to iteratively solve the LDT-WCC-ED problem.

\subsection{LDT-CC-ED: ED with Large Deviation Theory CCs} 
\label{Subsection:LDT-CC}

The LDT-CC-ED model schedules two components of reserved capacity for each generator $n$, quantified by the participation factors $\alpha_n$ and $\beta_n$ in a linear control policy. These two components, referred to as the ``regular reserve'' and ``extreme reserve'' respectively, are used to hedge the risk associated with regular wind fluctuations and extreme/rare events. The LDT-CC-ED model is formulated as follows:
\begin{subequations} \label{LDT-General}
\begin{align}
\min_{p,\alpha,\beta} ~& \mathbb{E}_{\boldsymbol{\Omega}} \big[ \sum_{n \in \mathcal{N}} C_n(p_n,\alpha_n, \beta_n) \big]\\
\text{s.t.} \quad &\alpha_n,~ \beta_n,~ p_n \geq 0 & \forall n \label{LDT-General-Variables}\\
&\mathbb{P}_{\boldsymbol{\Omega}}[p_n +\alpha_n \boldsymbol{\Omega} \leq p^{\max}_n] \geq 1-\epsilon_n & \forall n \label{LDT-General:MaxLimitCC}\\
&\mathbb{P}_{\boldsymbol{\Omega}}[p^{\min}_n \leq p_n +\alpha_n \boldsymbol{\Omega}] \geq 1-\epsilon_n & \forall n \label{LDT-General:MinlimitCC}\\
&\mathbb{P}_{\boldsymbol{\Omega}_\Plus}[p_n + \delta_n(\boldsymbol{\Omega}_\Plus) \leq p^{\max}_n] \geq 1-\epsilon^{\text{ext}}_n & \forall n \label{LDT-General:MaxlimitLDT}\\
&\mathbb{P}_{\boldsymbol{\Omega}_\Minus}[p^{\min}_n \leq p_n + \delta_n(\boldsymbol{\Omega}_\Minus)] \geq 1-\epsilon^{\text{ext}}_n & \forall n \label{LDT-General:MinlimitLDT}\\
&\sum_{n \in \mathcal{N}} p_n = D - \hat{W} \label{LDT-General:EnergyBalance}\\
&\sum_{n \in \mathcal{N}} \alpha_n = 1 \label{LDT-General:RegularReserveBalance}\\
&\sum_{n \in \mathcal{N}} \beta_n = 1. \label{LDT-General:ExtremeReserveBalance}
\end{align}
\end{subequations}
Model \eqref{LDT-General} includes two sets of chance constraints, which differ in how they quantify risk for regular and extreme forecast deviations, as well as in the control policies associated with these deviations. Specifically, \eqref{LDT-General:MaxLimitCC}-\eqref{LDT-General:MinlimitCC} are regular CCs that schedule regular reserve with participation factor $\alpha_n$, in response to regular deviations within $\boldsymbol{\Omega}$. In contrast, \eqref{LDT-General:MaxlimitLDT}-\eqref{LDT-General:MinlimitLDT} are LDT-CCs, which schedule reserve with participation factor $\beta_n$ to handle extreme or rare events. The extreme event set $\boldsymbol{\Omega_\Plus}$ in \eqref{LDT-General:MaxlimitLDT} is defined as $\{ \boldsymbol{\Omega_\Plus}: p_n + \delta_n (\boldsymbol{\Omega_\Plus}) \geq p_n^{\max}, \ \forall n\}$, with its boundary corresponding to scenarios where all generators operate at their maximum output limits. Similarly, the other extreme event set $\boldsymbol{\Omega_\Minus}$ in \eqref{LDT-General:MinlimitLDT} is defined as $\{\boldsymbol{\Omega_\Minus} : p_n + \delta_n (\boldsymbol{\Omega_\Minus}) \leq p_n^{\min}, \ \forall n\}$. 
The acceptable violation rates for CCs and LDT-CCs are $\epsilon_n$ and $\epsilon_n^{\text{ext}} \in [0,1]$, respectively, where $\epsilon_n \ll 1$ and $\epsilon_n^{\text{ext}} \lll 1$. By applying the same reformulation steps used for \eqref{ModelCC:MaxLimit} and \eqref{ModelCC:MinLimit}, the CCs in \eqref{LDT-General:MaxLimitCC} and \eqref{LDT-General:MinlimitCC} can be transformed into \eqref{CC_reform}.

Similar to the CC-ED and WCC-ED models, LDT-CC-ED requires power balance as stated in \eqref{LDT-General:EnergyBalance} and the constraint on the adequacy of regular reserve as in \eqref{LDT-General:RegularReserveBalance}. Additionally, it enforces extreme reserve adequacy through \eqref{LDT-General:ExtremeReserveBalance}. Note that the regular reserve $\alpha_n \boldsymbol{\Omega}$ also contributes to satisfying the LDT-CCs in \eqref{LDT-General:MaxlimitLDT} and \eqref{LDT-General:MinlimitLDT}.
To avoid overlap between the provision of regular and extreme reserves, we define the control policy for extreme reserve as $\delta_n(\boldsymbol{\Omega}) = (\alpha_n - \beta_n)\hat{\sigma}_n + \beta_n\boldsymbol{\Omega}$, where $\hat{\sigma}_n= \Phi^{-1}(1-\epsilon_n) \sigma_{\Omega}$ is a given parameter.

\subsection{Reformulations of LDT-CC-ED} 
\label{Subsection:LDT-CC-solution}

Following \cite{tong2022optimization}, we can  reformulate \eqref{LDT-General:MaxlimitLDT} and \eqref{LDT-General:MinlimitLDT} into more tractable forms using the dominating points $\Omega^*_\Plus \in \boldsymbol{\Omega_\Plus}$ and $\Omega^*_\Minus \in \boldsymbol{\Omega_\Minus}$. Therefore, the LDT-CC-ED problem is essentially a bi-level optimization problem, where the upper-level problem solves \eqref{LDT-General} with given $\Omega^*_\Plus$ and $\Omega^*_\Minus$, and the lower-level problem  determines  dominating points $\Omega^*_\Plus$ and $\Omega^*_\Minus$. In the following, we will introduce the reformulation method for LDT-CC in two steps, based on  general probability $\mathbb{P}[F (p,\boldsymbol{\Omega}) \leq 0]$.

First, we will explain how to determine  dominating point $\Omega^*$ based on  rate function $I(\boldsymbol{\Omega})$. 
According to \cite{tong2022optimization,tong2021extreme}, lower values of $I(\boldsymbol{\Omega})$ indicate more probable events, while higher values correspond to less probable events. Therefore,  dominating point $\Omega^*$ is the point within the extreme event set that minimizes the rate function, i.e., $\Omega^*\in \text{arg}\min_{\boldsymbol{\Omega}} \ I(\boldsymbol{\Omega})$. As noted in \cite{tong2022optimization}, $I(\cdot)$ is the conjugate of the cumulant-generating function, providing an alternative to moments for characterizing the distribution. Consequently, $I(\boldsymbol{\Omega})$ can be determined based on an assumed distribution of $\boldsymbol{\Omega}$.

Next, we will explain how to estimate  probability $\mathbb{P}[F (p,\boldsymbol{\Omega}) \leq 0]$ with  given $\Omega^*$, where the challenge is the potential nonlinearity of $F$ with respect to $\boldsymbol{\Omega}$. Following \cite{tong2022optimization}, we use $F_k(p, \boldsymbol{\Omega}; \Omega^*)$, which is the $k$ th-order Taylor's approximation of $F (p,\boldsymbol{\Omega})$ at point $\Omega^*$, to approximate $\mathbb{P}[F (p,\boldsymbol{\Omega}) \leq 0]$, when $\boldsymbol{\Omega}$ follows a Gaussian distribution. We can then compute the first- and second-order probability estimates, $P_1$ and $P_2$ of $\mathbb{P}[F (p,\boldsymbol{\Omega}) \leq 0]$, by computing the measure of the sets bounded by the corresponding Taylor approximations:
$P_k(p, \boldsymbol{\Omega}) = P  (F_k(p, \boldsymbol{\Omega}; \Omega^*  ) \leq  0 ) , k \in \{1,2\}$. The specific expression of $F_k$ can be found in Section 3.1 in \cite{tong2022optimization}. This approach approximates the nonlinear (in $\boldsymbol{\Omega}$) chance
constrained problem with a linear and quadratic problem, respectively.

We can apply the two steps outlined above to the LDT-CCs in \eqref{LDT-General}. As an example, we will focus on reformulating \eqref{LDT-General:MaxlimitLDT}, with a similar approach applicable to \eqref{LDT-General:MinlimitLDT}. The reformulation of \eqref{LDT-General:MaxlimitLDT} is as follows:
\begin{subequations} \label{LDTraw}
\begin{align} 
    & P_1(y_n^{\Plus}(p_n,\alpha_n,\beta_n, \boldsymbol{\Omega}_\Plus; \Omega^*_\Plus)\ge 0) \leq \epsilon^{\text{ext}}_n, \ \forall n \label{LDTraw:Taylor}\\
    & \Omega^*_\Plus\in \text{arg}\min_{\boldsymbol{\Omega_\Plus}} \{ I(\boldsymbol{\Omega_\Plus}): p_n + \delta_n(\boldsymbol{\Omega_\Plus}) \geq p_n^{\max},\ \forall n \}, \label{LDTraw:LowerLevel}
\end{align} 
\end{subequations}
where $y_n^{\Plus} = p_n + \delta_n(\boldsymbol{\Omega}_\Plus) - p^{\max}_n$ is the overload component for \eqref{LDT-General:MaxlimitLDT}. Note that in \eqref{LDTraw:Taylor} we only use the first-order Taylor’s approximation since the critical region $\boldsymbol{\Omega_\Plus}$ has a linear boundary. 

Recall that we assume the wind power forecast errors follow a zero-mean Gaussian distribution, i.e. $\boldsymbol{\Omega} \sim \mathcal{N}(0,\sigma_{\Omega}^2)$. According to \cite{tong2022optimization}, under this assumption, the rate function can be computed using the Legendre transformation of the cumulant function as $I(\boldsymbol{\Omega}) = \frac{1}{2} \boldsymbol{\Omega}^{2} (\sigma_{\Omega}^2)^{\Minus 1}$. Using the first-order probability estimate $P_1(\cdot) = \Phi( \Minus\sqrt{2I(\Omega^*)})$ and substituting $\mu_{\Omega} = 0$, the LDT-CC-ED problem in \eqref{LDT-General} can be reformulated as:
\begin{subequations} \label{LDT-Reformulation}
\begin{align} 
\min_{p,\alpha,\beta} ~& \mathbb{E}_{\boldsymbol{\Omega}} \big[ \sum_{n \in \mathcal{N}} C_n(p_n,\alpha_n, \beta_n) \big]\\
\text{s.t.} \quad & \eqref{CC_reform}, \eqref{LDT-General-Variables}, \eqref{LDT-General:EnergyBalance}-\eqref{LDT-General:ExtremeReserveBalance} \nonumber \\
& \Phi\Big( \Minus \sqrt{{\Omega^*_\Plus}^{2} (\sigma_{\Omega}^2)^{\Minus 1}}\Big) \leq \epsilon^{\text{ext}}_n, \ \forall n\\
& \Omega^*_\Plus\in \text{arg}\min_{\boldsymbol{\Omega}} \Big\{ \frac{1}{2} \boldsymbol{\Omega}^{2} (\sigma_{\Omega}^2)^{\Minus 1}: p_n \nonumber \\
& \qquad \Plus (\alpha_n - \beta_n)\hat{\sigma}_{n} + \beta_n\boldsymbol{\Omega} \geq p_n^{\max} ,\ \forall n \Big\}. \label{LDT-Reformulation-LowerLevel}
\end{align}
\end{subequations}

Note that \eqref{LDT-Reformulation} is still a bi-level optimization problem which is hard to solve. However, we can transform \eqref{LDT-Reformulation} into a single-level problem by replacing \eqref{LDT-Reformulation-LowerLevel} with its first-order optimality conditions, resulting in the computationally tractable LDT-CC-ED model as follows:
\begin{subequations} \label{LDT-CC}
\begin{align}
\min_{p,\alpha,\beta, \Omega^*_\Plus, \lambda^*_\Plus} ~& \mathbb{E}_{\boldsymbol{\Omega}} \big[ \sum_{n \in \mathcal{N}} C_n(p_n,\alpha_n, \beta_n) \big]\\
\text{s.t.}\quad &\alpha_n, \beta_n, p_n \geq 0, \lambda^{\Plus *}_{n} > 0 & \forall n \label{LDT-CC-T0}\\
(\delta_{n}^{\Plus}): \quad & p_n - p_{n}^{\max} + \alpha_n \hat{\sigma}_n \leq 0 & \forall n \label{LDT-CC-T1}\\
(\mu_{n}^{\Plus}): \quad &  p_n + (\alpha_n - \beta_n)\hat{\sigma}_{n} + \beta_n\Omega^*_\Plus = p_n^{\max}  & \forall n \label{LDT-CC-R1}\\
(\kappa^{\Plus}): \quad & \Minus(\sigma_{\Omega}^2)^{\Minus 1/2} \Omega^*_\Plus - \Phi^{\Minus 1}( \epsilon^{\text{ext}}) \leq 0\\
(\xi_n^{\Plus}): \quad & (\sigma_{\Omega}^2)^{\Minus 1} \Omega^*_\Plus - \beta_n \lambda^{\Plus *}_{n} = 0 & \forall n \label{LDT-CC-R3}\\
(\pi): \quad &\sum_{n \in \mathcal{N}} p_n = D - \hat{W} \label{LDT-CC-demand}\\
(\rho): \quad &\sum_{n \in \mathcal{N}} \alpha_n = 1 \label{LDT-CC-reserve_op}\\
(\chi): \quad & \sum_{n \in \mathcal{N}} \beta_n = 1. \label{LDT-CC-reserve_ad}
\end{align}
\end{subequations}
Here, Greek letters in parentheses on the left denote dual multipliers of the corresponding constraints. The additional decision variable $\lambda^{\Plus *}$ is the optimal value of the dual multiplier associated with the generator limit constraints in the lower-level problem \eqref{LDT-Reformulation-LowerLevel}. Note that $\lambda^{\Plus *}$ has the same dimension as the number of generators in the system, which corresponds to the number of constraints in \eqref{LDT-Reformulation-LowerLevel}. Eqs. \eqref{LDT-CC-T1} is the reformulation of the regular CC in \eqref{LDT-General:MaxLimitCC}. Eqs. \eqref{LDT-CC-R1}-\eqref{LDT-CC-R3} are the reformulations of \eqref{LDT-General:MaxlimitLDT}. 
For simplicity, the reformulations of \eqref{LDT-General:MinlimitCC} and \eqref{LDT-General:MinlimitLDT} are not provided here, but they follow the same structure as the reformulations of their corresponding upper limit constraints.

In summary, LDT-CC offers a tractable formulation for the control policy function by using Taylor approximations to estimate the true probability of rare realizations of the uncertainty $\boldsymbol{\Omega}$. Following the proposed reformulation steps, the LDT-CC-ED problem is reduced to a single-level ED problem which includes Taylor's approximation of chance constraints in the neighborhood of $\Omega^*$. This formulation allows the use of off-the-shelf solvers, is independent of samples or scenarios, and provides an analytical expression for electricity pricing, which will be further studied in Section~\ref{Sec:Pricing}.

\subsection{LDT-WCC-ED: A Relaxation of LDT-CC-ED}
\label{Subsection:LDT-WCC}
Although \eqref{LDT-CC} offers an effective way to schedule extreme reserve, it often leads to overly conservative solutions for rare and extreme events, potentially exacerbating out-of-merit  dispatch and leading to greater costs. This conservatism arises because the LDT-CCs in \eqref{LDT-General:MinlimitLDT} and \eqref{LDT-General:MaxlimitLDT} are designed to schedule extreme reserve to hedge against risks associated with $\Omega^*_\Plus$ and $\Omega^*_\Minus$. However, $\Omega^*_\Plus$ and $\Omega^*_\Minus$ lie at the boundary of the system's critical operating region, while in reality, a power system rarely reaches its operational limits. As a result, although the extreme reserve based on this principle ensures exceptionally high system reliability, some scheduled reserve capacity is in fact never activated, leading to cost inefficiencies.

To mitigate this conservatism, we propose a relaxation of the LDT-CC-ED model that leverages the information from  dominant point $\Omega^*$ but no longer requires full hedging of the risk assocaited with $\Omega^*$. Instead, we employ the WCC approach to differentiate risks far from $\Omega^*$ and those closer to $\Omega^*$. For example, to hedge the overload risk of generators, we can use the following piece-wise affine control policy: 
\begin{equation} \label{LDT-WCC-policy}
    g_{n}(\boldsymbol{\Omega}, \Omega^*_\Plus) {\rm{=}}
\begin{cases}
p_n + \alpha_{n}\boldsymbol{\Omega} , \qquad \qquad \qquad \quad \boldsymbol{\Omega} \leq \Omega_{\epsilon}^\Plus \\
p_n + \beta_n \Omega^*_\Plus +(\alpha_{n} - \beta_n)\boldsymbol{\Omega} , \ \Omega_{\epsilon}^\Plus < \boldsymbol{\Omega}.
\end{cases}
\end{equation}
Similar to the definitions of regular and extreme reserves in the LDT-CC-ED framework, we refer to $\alpha_n \Omega$ and $\beta_n \Omega$ as the amounts of regular and extreme reserves provided by generator $n$, respectively. The threshold value $\Omega_{\epsilon}^{\Plus}$ is an adjustable parameter that determines the separation point between scheduling regular and extreme reserves. This scheduling model is referred to LDT-WCC-ED.

Under the piece-wise affine control policy in \eqref{LDT-WCC-policy}, the LDT-WCC-ED model can be formulated as follows:
\begin{subequations} \label{LDT-WCC-Model}
\begin{align}
\min_{p,\alpha,\beta} \quad & \mathbb{E}_{\boldsymbol{\Omega}} \big[ \sum_{n \in \mathcal{N}} C_n(p_n,\alpha_n, \beta_n) \big]\\
\text{s.t.}\quad & \eqref{LDT-CC-demand}-\eqref{LDT-CC-reserve_ad} \label{LDT-WCC-Model-balances} \nonumber \\
&~\forall n ~\Big\{p_n, \alpha_n, \beta_n  \geq 0 \\
(\nu_n^\Plus): & ~\int_{\Minus\infty}^{\Omega_{\epsilon}^\Plus } \int_{0}^{\infty} y_n^\Plus \mathcal{P}(y_n^\Plus|\boldsymbol{\Omega}) \mathcal{P}(\boldsymbol{\Omega}) dy_n^\Plus d\boldsymbol{\Omega} \nonumber\\
&~ + \int_{\Omega_{\epsilon}^\Plus }^{\infty} \int_{0}^{\infty} y_n^\Plus \mathcal{P}(y_n^\Plus|\boldsymbol{\Omega}) \mathcal{P}(\boldsymbol{\Omega}) dy_n^\Plus d\boldsymbol{\Omega}\leq \epsilon_n  \label{LDT-WCC-Model-Maxlimit}\\
(\nu_n^\Minus): &~ \int_{\Minus\infty}^{\Omega_{\epsilon}^\Minus } \int_{0}^{\infty} y_n^\Minus \mathcal{P}(y_n^\Minus|\boldsymbol{\Omega}) \mathcal{P}(\boldsymbol{\Omega}) dy_n^\Minus d\boldsymbol{\Omega} \nonumber\\
&~ + \int_{\Omega_{\epsilon}^\Minus }^{\infty} \int_{0}^{\infty} y_n^\Minus \mathcal{P}(y_n^\Minus|\boldsymbol{\Omega}) \mathcal{P}(\boldsymbol{\Omega})dy_n^\Minus d\boldsymbol{\Omega}\leq \epsilon_n. \Big\}\label{LDT-WCC-Model-Minlimit}
\end{align}
\end{subequations}
where $y_n^\Plus =g_n(\boldsymbol{\Omega}) - p_n^{\max}$ is the overload component for upper limit of generator $n$'s output, while $y_n^\Minus = p_n^{\min} - g_n(\boldsymbol{\Omega}) $ is the overload component for lower limit of generator $n$'s output.

Comparing the LDT-WCCs in \eqref{LDT-WCC-Model-Maxlimit}-\eqref{LDT-WCC-Model-Minlimit} with the CCs in \eqref{ModelCC:MaxLimit}-\eqref{ModelCC:MinLimit}, we notice that they use the same risk tolerance $\epsilon_n$, meaning they hedge risk with the same cumulative probability. However, LDT-WCC assigns smaller weights to violations of smaller magnitude, it can generally cover a larger range of $y(\boldsymbol{\Omega})$ than the CCs, provided the weight function is designed appropriately.

On the other hand,  when comparing the LDT-WCCs with the LDT-CCs in \eqref{LDT-General:MaxlimitLDT}-\eqref{LDT-General:MinlimitLDT}, we notice that the explicit value of $\epsilon_n^{\text{ext}}$ is no longer present in the LDT-WCCs. However, it is implicitly embedded in the values of $\Omega^*_\Plus$ and $\Omega^*_{\Minus}$. Therefore, the LDT-WCC model  leverages anticipated extreme realizations to enhance system performance. We refer to this ability to exploit a priori rare event statistics for possible future extreme event realizations as ``anticipative preparedness''.

Fig.~\ref{fig:scheme} illustrates the differences between the CC benchmark, the proposed LDT-CC model, and the relaxed LDT-WCC model. In Fig.~\ref{fig:scheme}, the CC model hedges against deviations during regular operations with a cumulative probability of $(1-\epsilon)$, applying a uniform weight function for constraint violations. Both LDT-CC and LDT-WCC account for the characteristics of the tail distribution. However, LDT-CC covers deviations up to the dominating point $\Omega^*$, representing extreme conditions, while LDT-WCC prepares the system for these extreme conditions by incorporating information about $\Omega^*$ into the control policy and applying a linear weighting to constraint violations. In summary, when comparing the risk of constraint violation, the CC model covers the smallest range, followed by the LDT-WCC model, with the LDT-CC model covering the largest range.

The conservatism level of the LDT-WCC-ED can be adjusted by modifying the threshold $\Omega_{\epsilon}^\Plus$ in the generator control policy. Here, the conservatism level reflects the system operator's subjective perception of extreme scenarios. A larger value of  $\Omega_{\epsilon}^\Plus$ implies that the system operator considers only very large deviations from a given forecast as extreme and is willing to schedule extreme reserve only when the uncertainty is sufficiently high. To enable extreme reserve scheduling, it is necessary to ensure $\Omega_{\epsilon}^{\Plus} < \Omega^*_\Plus$, because when $\boldsymbol{\Omega} > \Omega^*_\Plus$ none of the generators has additional available capacity. If $\Omega_{\epsilon}^{\Plus} = \Omega^*_\Plus$, then all generators can provide only regular reserve, as they would always follow the first segment of the piecewise linear control policy in \eqref{LDT-WCC-policy}. However, adjusting this parameter not only influences reserve scheduling but also affects the prices of different reserve services (as discussed in Section~\ref{Sec:Pricing}), leading to a nonlinear impact on the system's overall operational cost. In real power systems, system and market operators can tune this values based on their experience and subjective trade-off between system reliability and operational cost.

\begin{figure}[!t]
    \centering
    \includegraphics[width=0.49\textwidth]{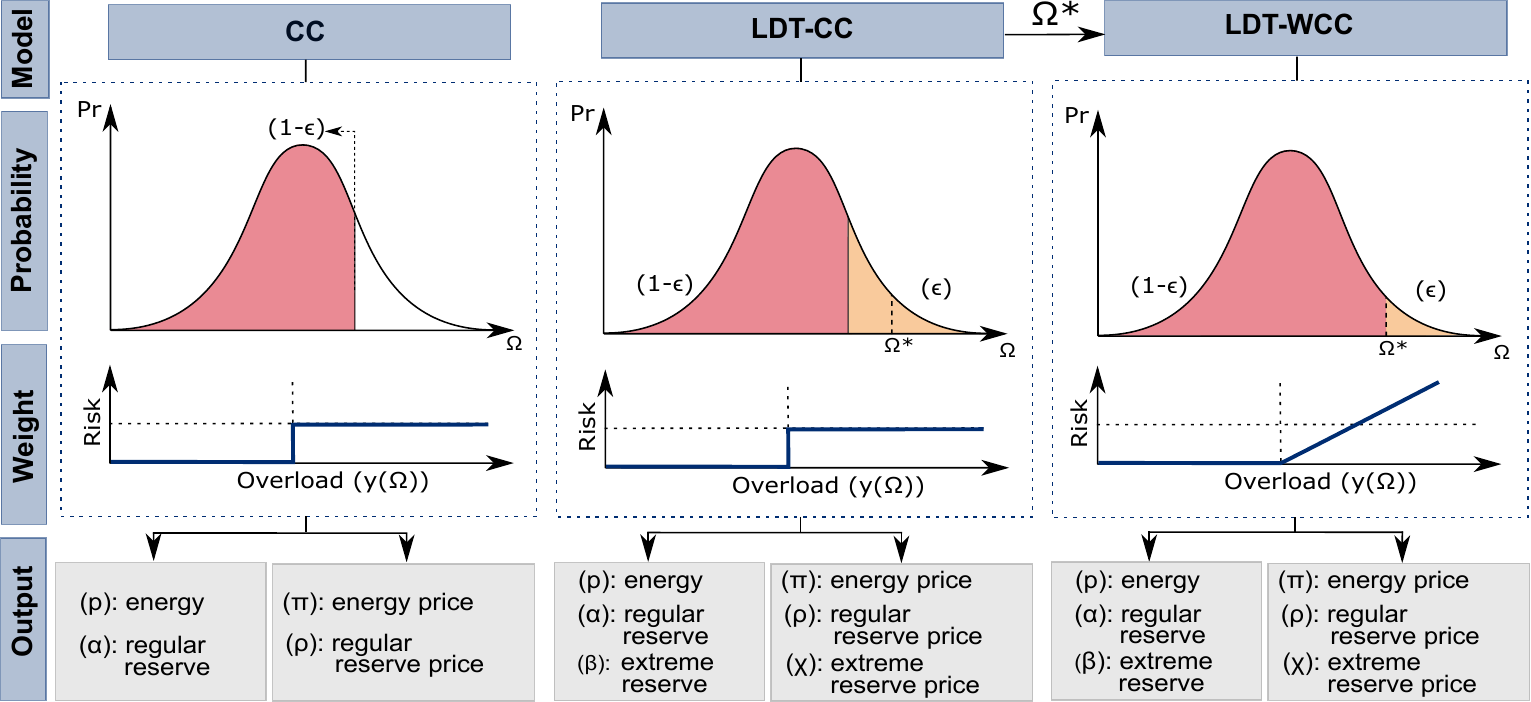}
    \caption{Comparison of benchmark (CC), the proposed model (LDT-CC), and its relaxation (LDT-WCC).}
\label{fig:scheme}
\end{figure}

\subsection{Reformulations of LDT-WCC-ED} 
\label{Subsection:LDT-WCC-solution}

Following a similar reformulation method as the standard WCC introduced in \cite{roald2015optimal}, the LDT-WCCs in \eqref{WCC-linear} can be transformed into the form of a truncated Gaussian distribution. In the following, we will focus on reformulating \eqref{LDT-WCC-Model-Maxlimit}, with a similar approach applicable to \eqref{LDT-WCC-Model-Minlimit}. The reformulation of \eqref{LDT-WCC-Model-Maxlimit} is as follows:
\begin{align}  \label{LDT-WCC-Reformulation}
     & \Tilde{\mu}_n^\text{I} \big(1- \Phi(z_n^\text{I})\big) + \frac{\Tilde{\sigma}_n^\text{I}}{\sqrt{2\pi}} e^{\frac{\Minus1}{2} (z_n^\text{I})^2} \nonumber \\
     &+ \Tilde{\mu}_n^\text{II} \big(1-\Phi (z_n^\text{II})\big) + \frac{\Tilde{\sigma}_n^\text{II}}{\sqrt{2\pi}} e^{ \frac{\Minus1}{2} (z_n^\text{II})^2} \leq \epsilon_n 
\end{align}
where $z^\text{I}_{n} = {\Minus\Tilde{\mu}_n^\text{I}}/{\Tilde{\sigma}_n^\text{I}}$, $z^\text{II}_{n} = {\Minus\Tilde{\mu}_n^\text{II}}/{\Tilde{\sigma}_n^\text{II}}$, $\Tilde{\mu}_n^\text{I}$ and $\Tilde{\mu}_n^\text{II}$ are the means of the random variable $y_n^\Plus|\boldsymbol{\Omega}$ when $\boldsymbol{\Omega} \leq \Omega_{\epsilon}^{\Plus} $ and $\boldsymbol{\Omega} > \Omega_{\epsilon}^{\Plus} $, respectively. Similarly, $(\Tilde{\mu}_n^\text{I})^2$ and $(\Tilde{\mu}_n^\text{II})^2$ are the variance of $y_n^\Plus|\boldsymbol{\Omega}$ in these two regions, respectively. The means are:
\begin{subequations} \label{LDT-WCC-Mean}
\begin{align} 
    \Tilde{\mu}_n^\text{I} & =  p_n - p_n^{\max} - \alpha_{n}\sigma_{\Omega} \frac{\phi(z_{\Omega*})}{\Phi(z_{\Omega*})} \\
    \Tilde{\mu}_n^\text{II} & = p_n - p_n^{\max} + \beta_n \Omega^* + (\alpha_{n}\Minus\beta_{n})\sigma_{\Omega} \frac{\phi(z_{\Omega*})}{1-\Phi(z_{\Omega*})}
\end{align}
\end{subequations}
and the variance are:
\begin{subequations} \label{LDT-WCC-Variance}
\begin{align}
(\Tilde{\sigma}_n^\text{I})^2 & =(\alpha_n \sigma_{\Omega})^2 \Big(1 - z_{\Omega*} \frac{\phi(z_{\Omega*})}{\Phi(z_{\Omega*})} - \Big(\frac{\phi(z_{\Omega*})}{\Phi(z_{\Omega*})}\Big)^2 \Big) \\
(\Tilde{\sigma}_n^\text{I})^2 & = \Big((\alpha_n - \beta_n) \sigma_{\Omega}\Big)^2 \Big( 1 + z_{\Omega*} \frac{\phi(z_{\Omega*})}{1 - \Phi(z_{\Omega*})} \nonumber \\
& \qquad \qquad \qquad \qquad- \Big(\frac{\phi(z_{\Omega*})}{1-\Phi(z_{\Omega*})} \Big)^2  \Big)
\end{align}
\end{subequations}
where $z_{\Omega*} = {\Omega^*}/{\sigma_\Omega}$, $\Phi(\cdot)$ is the cumulative distribution, and $\phi(\cdot)$ is a probability density function of a Gaussian distribution. Note that the mean and variance of $y_n^\Plus|\boldsymbol{\Omega}$ are variables rather than parameters because they contain $p_n$, $\alpha_n$ and $\beta_n$. 

The LDT-WCC formulation in \eqref{LDT-WCC-Reformulation} involves decision variables within the cumulative distribution function that are multiplied by other decision variables, necessitating an additional procedure for solvers to recognize it. To address this efficiently, we apply a cutting-plane method. Specifically, we iteratively solve a sequence of relaxations of \eqref{LDT-WCC-Model} without the original LDT-WCCs in \eqref{LDT-WCC-Model-Maxlimit} and \eqref{LDT-WCC-Model-Minlimit}, but with a set of cutting-plane constraints. These cutting planes are the first-order approximations of \eqref{LDT-WCC-Model-Maxlimit} and \eqref{LDT-WCC-Model-Minlimit} at different points.
This results in a linear optimization problem that can be handled by commercial solvers. 

Algorithm \ref{Cutting plane approach} outlines this cutting-plane method for solving \eqref{LDT-WCC-Model}. At each iteration $t$, we verify whether \eqref{LDT-WCC-Model-Maxlimit} and \eqref{LDT-WCC-Model-Minlimit} are satisfied by the current solution $p_n^t, \alpha_n^t, \beta_n^t$. If they are, the algorithm terminates. If not, we add linearizations of \eqref{LDT-WCC-Model-Maxlimit} and \eqref{LDT-WCC-Model-Minlimit} and resolve the problem in the next iteration. 

\begin{algorithm}
\caption{Cutting-Plane Algorithm for LDT-WCC-ED}
\label{Cutting plane approach}
\label{alg:example}
\begin{algorithmic}[1]
\State Initialize iteration $t=0$
\State Initialize an empty cutting-plane constraint set
\State Solve \eqref{LDT-WCC-Model} with the cutting-plane constraints, but without \eqref{LDT-WCC-Model-Maxlimit} and \eqref{LDT-WCC-Model-Minlimit}, to obtain ($p_n^t, \alpha_n^t, \beta_n^t$).
\State Check if ($p_n^t, \alpha_n^t, \beta_n^t$) satisfies \eqref{LDT-WCC-Model-Maxlimit} and \eqref{LDT-WCC-Model-Minlimit}.
\If {\eqref{LDT-WCC-Model-Maxlimit} or \eqref{LDT-WCC-Model-Minlimit} \textbf{is not} satisfied by ($p_n^t, \alpha_n^t, \beta_n^t$)}
    \State Compute $\Tilde{\mu}_n^t$ and $\Tilde{\sigma}_n^t$ following \eqref{LDT-WCC-Mean} and \eqref{LDT-WCC-Variance}.
    \State Add the first-order Taylor approximation of the unsatisfied constraint at ($\Tilde{\mu}_n^t$, $\Tilde{\sigma}_n^t$) to the cutting-plane constraint set.
    \State $t=t+1$
    \State Go to \textbf{Step 3}
\Else
    \State Exit
\EndIf
\end{algorithmic}
\end{algorithm}

For clarity, we demonstrate the linearization of \eqref{LDT-WCC-Model-Maxlimit} and \eqref{LDT-WCC-Model-Minlimit} using the basic element of a WCC reformulation in its simplest form, as follows:
\begin{align} \label{unit WCC}
    f(\Tilde{\mu}_n,\Tilde{\sigma}_n) =~& \Tilde{\mu}_n \Big( 1 - \Phi(z_n)\Big) + \frac{\Tilde{\sigma}_n}{\sqrt{2\pi}} e^{\frac{\Minus1}{2} (z_n)^2}.
\end{align}
Based on $p_n^t, \alpha_n^t, \beta_n^t$ which are the optimal solutions of the relaxed LDT-WCC-ED at iteration $t$, we can compute the values of $\Tilde{\mu}_n^t$ and $\Tilde{\sigma}_n^t$ following \eqref{LDT-WCC-Mean} and \eqref{LDT-WCC-Variance}. The first-order approximation of \eqref{unit WCC} at ($\Tilde{\mu}_n^t$, $\Tilde{\sigma}_n^t$) is:
\begin{align}
    & f(\Tilde{\mu}_n^t,\Tilde{\sigma}_n^t)+ \frac{\partial f(\cdot)}{\partial \Tilde{\mu}_n} \frac{\partial \Tilde{\mu}_n}{\partial p_n} (p_n - p_n^t) \nonumber \\
    & + \frac{\partial f(\cdot)}{\partial \Tilde{\mu}_n} \frac{\partial \Tilde{\mu}_n}{\partial \alpha_n} (\alpha_n \Minus \alpha_n^t) + \frac{\partial f(\cdot)}{\partial \Tilde{\mu}_n} \frac{\partial \Tilde{\mu}_n}{\partial \beta_n} (\beta_n \Minus \beta_n^t) \nonumber \\
    & + \frac{\partial f(\cdot)}{\partial \Tilde{\sigma}_n} \frac{\partial \Tilde{\sigma}_n}{\partial \alpha_n} (\alpha_n \Minus \alpha_n^t) + \frac{\partial f(\cdot)}{\partial \Tilde{\sigma}_n} \frac{\partial \Tilde{\sigma}_n}{\partial \beta_n} (\beta_n \Minus \beta_n^t) \leq \epsilon_n, \label{CP-Linearization}
\end{align}
where the derivatives with respect to $\Tilde{\mu}$ and $\Tilde{\sigma}$ are:
\begin{align}
    \frac{\partial f(\cdot)}{\partial \Tilde{\mu}_n} =& \Big(1 - \Phi\big(z_n\big)\Big) +\Tilde{\mu}_n \Big(\frac{\Minus 1}{\Tilde{\sigma}_n}\phi\big(z_n\big)\Big) - \frac{z_n}{\sqrt{2\pi}} e^{ \frac{\Minus 1}{2} (z_n)^2} \label{df_dmu}\\
    \frac{\partial f(\cdot)}{\partial \Tilde{\sigma}_n}=& -\frac{\Tilde{\mu}_n^2}{\Tilde{\sigma}_n^2} \phi\big(z_n\big) +\frac{1}{\sqrt{2\pi}} e^{ \frac{\Minus1}{2} (z_n)^2 } \Big(1 + \frac{\Tilde{\mu}_n^2}{\Tilde{\sigma}_n^2}\Big) \label{df_dtheta}.
\end{align}
In addition, we can compute other partial derivatives, including $\frac{\partial \Tilde{\mu}_n}{\partial p_n}$, $\frac{\partial \Tilde{\mu}_n}{\partial \alpha_n}$, $\frac{\partial \Tilde{\mu}_n}{\partial \beta_n}$, $\frac{\partial \Tilde{\sigma}_n}{\partial \alpha_n}$, $\frac{\partial \Tilde{\sigma}_n}{\partial \beta_n}$, based on the expressions for $\Tilde{\mu}_n$ in \eqref{LDT-WCC-Mean} and $\Tilde{\sigma}_n$ in \eqref{LDT-WCC-Variance}. The values of these partial derivatives depend on the region in which the realization of $\boldsymbol{\Omega}$ is located. 

Although the cutting-plane algorithm typically does not guarantee polynomial-time convergence \cite{lubin2015robust}, the convexity of the proposed models, combined with the compactness and non-emptiness of the feasible region, ensures convergence to the optimal solution \cite{bienstock2014chance}.

\section{Prices for Energy, Regular Reserve, and Extreme Reserve} \label{Sec:Pricing}
In this section, we derive the prices for providing energy, regular reserve, and extreme reserve based on the LDT-CC-ED model in Section~\ref{Subsection:LDT-CC-pricing} and the LDT-WCC-ED model in Section~\ref{Subsection:LDT-WCC-pricing}. We examine the competitive equilibrium established by the proposed market clearing models, where all market participants maximize their profits or utilities under the market clearing results, ensuring no incentive to deviate from the market outcomes. Additionally, we analyze key market properties, such as cost recovery and revenue adequacy, based on these equilibrium prices in Section~\ref{Subsection:market properties}. For notation simplicity, in this section we have omitted all constraints related to $p_n^{\min}$ and focused only on the CC, LDT-CC, and LDT-WCC constraints associated with $p_n^{\max}$. This can be interpreted as assuming $p_n^{\min} =0, \ \forall n \in \mathcal{N}$, thereby turning the lower power limit on generators  into hard constraints, i.e., $p_n \geq 0, \ \forall n \in \mathcal{N}$.

\subsection{Market-Clearing Based on LDT-CC-ED}
\label{Subsection:LDT-CC-pricing}
Consider the LDT-CC-ED model in \eqref{LDT-CC}. Let $\pi$, $\rho$, and $\chi$ denote the dual multipliers for the power balance constraint in \eqref{LDT-CC-demand}, the regular reserve sufficiency constraint in \eqref{LDT-CC-reserve_op}, and the extreme reserve sufficiency constraint in \eqref{LDT-CC-reserve_ad}, respectively. These multipliers represent the prices for energy, regular reserve, and extreme reserve. We begin by deriving these prices based on \eqref{LDT-CC}, as presented in the following proposition:
\begin{proposition} \label{prop1}
(Pricing based on LDT-CC-ED) The optimal values of $\pi$, $\rho$, and $\chi$ based on \eqref{LDT-CC} are given by:
    \begin{subequations} \label{LDT-CC prices}
    \begin{align}
        \pi~& = \frac{\partial C_n(\cdot)}{\partial p_n} + \mu_{n}^{\Plus} + \delta_{n}^{\Plus} \label{LDT-CC-KKT-pi} \\
        \rho ~&= \frac{\partial C_n(\cdot)}{\partial \alpha_n} + \mu_{n}^{\Plus} \hat{\sigma}_{n} + \delta_{n}^{\Plus} \hat{\sigma}_{n}\\
        \chi ~&= \frac{\partial C_n(\cdot)}{\partial \beta_n} +\mu_{n}^{\Plus} (\Omega^*_\Plus-\hat{\sigma}_{n}) - \xi_{n}^{\Plus} \lambda^{\Plus *}_{n}
    \end{align}
    \end{subequations}   
\end{proposition}

\begin{proof} 
The Lagrangian function for \eqref{LDT-CC} is: 
\begin{align}
    \mathcal{L} &= \sum_{n \in \mathcal{N}} C_n(p_n,\alpha_n,\beta_n) + \sum_{n \in \mathcal{N}} \delta_n^{\Plus}(p_n - p_n^{\max} + \alpha_n \hat{\sigma}_n) \nonumber\\
    &+ \sum_{n \in \mathcal{N}} \mu_{n}^{\Plus} \Big(p_n +(\alpha_n - \beta_n)\hat{\sigma}_n + \beta_n \Omega^*_\Plus - p_n^{\max}\Big) \nonumber\\
    &+ \kappa^{\Plus} \Big(\Minus (\sigma_{\Omega}^2)^{\Minus 1/2} \Omega^*_\Plus - \Phi^{\Minus 1} (\epsilon_n^{\text{ext}}) \Big) \nonumber\\
    &+\sum_{n \in \mathcal{N}} \xi_n^{\Plus} \Big( (\sigma_{\Omega}^2)^{\Minus 1} \Omega^*_\Plus - \beta_n\lambda^{\Plus *}_n \Big) + \pi \Big( D + \hat{W}  - \sum_{n\in\mathcal{N}} p_n \Big) \nonumber \\
    & + \rho \Big(1 - \sum_{n \in \mathcal{N}} \alpha_n\Big) + \chi \Big(1 - \sum_{n \in \mathcal{N}} \beta_n\Big).
\end{align}
Then, we can derive the stationary conditions for \eqref{LDT-CC} as:
    \begin{subequations} \label{LDT-KKT}
    \begin{align}
        \frac{\partial \mathcal{L}}{\partial p_n}:~& \frac{\partial C_n(\cdot)}{\partial p_n} + \mu_{n}^{\Plus} + \delta_{n}^{\Plus} - \pi = 0 \label{LDT-KKT-C1}\\
        \frac{\partial \mathcal{L}}{\partial \alpha_n}:~&\frac{\partial C_n(\cdot)}{\partial \alpha_n} + \mu_{n}^{\Plus} \hat{\sigma}_{n}  + \delta_{n}^{\Plus} \hat{\sigma}_{n} -\rho = 0 \label{LDT-KKT-C2}\\
        \frac{\partial \mathcal{L}}{\partial \beta_n}:~&\frac{\partial C_n(\cdot)}{\partial \beta_n} + \mu_{n}^{\Plus} (\Omega^*_\Plus-\hat{\sigma}_{n} ) - \xi^{\Plus}_n \lambda^{\Plus *}_{n} -\chi = 0 \label{LDT-KKT-C3}
    \end{align}
    \end{subequations}
Thus, $\pi$, $\rho$, and $\chi$ can be expressed directly from \eqref{LDT-KKT}.
\end{proof}

Next, we analyze how the prices derived from LDT-CC-ED lead to a competitive equilibrium problem:
\begin{theorem} \label{theo1}
   (Market equilibrium based on LDT-CC-ED) Let $\{p_n^*, \alpha_n^*, \beta_n^*, \Omega^*_\Plus, \lambda^{\Plus *}_n\}$ denote the optimal solution to \eqref{LDT-CC} and let $\{ \pi^*, \rho^*, \chi^* \}$ be the corresponding dual variables. Then $\{\{p_n^*, \alpha_n^*, \beta_n^*, \lambda^{\Plus *}_n ~\forall n\}, \Omega^*_\Plus, \pi^*, \rho^*, \chi^*\}$ constitutes a market equilibrium, i.e.:
   \begin{itemize}
        \item The market clears at $\sum p_n^* = D- \hat{W}$, $\sum \alpha_n^* =1$, and $\sum \beta_n^* = 1$ 
        \item Each producer maximizes its profit under the payment $\Gamma_n = \pi^* p_n^* + \rho^* \alpha_n^* + \chi^* \beta_n^*$
    \end{itemize}
\end{theorem}

\begin{proof}
    Given an optimal pair $(\Omega^*_\Plus, \lambda^{\Plus }_n)$, if $\{p_n^*, \alpha_n^*, \beta_n^*, ~\forall n\}$ is feasible and solved to optimality, then the optimal values $\{p_n^*, \alpha_n^*, \beta_n^*~\forall n\}$ must satisfy the equality constraints in \eqref{LDT-CC}. Consequently, we have $\sum p_n^* = D- \hat{W}$, $\sum \alpha_n^* =1$, and $\sum \beta_n^* = 1$.
    
    Next, we model each producer as a risk-neutral, profit-maximizing entity. Given the optimal values of $(\Omega^*_\Plus, \lambda^{\Plus *}_n)$ from \eqref{LDT-CC}, generator $n$ optimizes its market participation based on the following model:
    \begin{subequations} \label{LDT-KKT-firm}
        \begin{align}
        \max_{p_n,\alpha_n,\beta_n} ~&  - C_n(p_n,\alpha_n, \beta_n) + l_n p_n + b_n \alpha_n + w_n \beta_n\\
        \text{s.t.}\quad & p_n, \alpha_n, \beta_n \geq 0 \\
        (\hat{\mu}_n^\Plus):\quad & \Minus p_n^{\max} + p_n + (\alpha_n - \beta_n) \hat{\sigma}_{n} + \beta_n \Omega^*_\Plus = 0\\
        (\hat{\delta}_n^\Plus):\quad & p_n - p_{n}^{\max} + \alpha_n \hat{\sigma}_n \leq 0 \\
        (\hat{\xi}_n^{\Plus}):\quad & (\sigma_{\Omega}^2)^{\Minus 1} \Omega^*_\Plus - \beta_n \lambda_n^{\Plus *} = 0
    \end{align}
    \end{subequations}
    where $l_n$, $b_n$, $w_n$ are the prices for energy, regular reserve and extreme reserve for this generator. The Karush-Kuhn-Tucker (KKT) optimality conditions of \eqref{LDT-KKT-firm} are:
    \begin{subequations} 
    \begin{align}
        & \Minus l_n + \hat{\mu}_n^\Plus + \hat{\delta}_n^\Plus + \frac{\partial C_n(\cdot)}{\partial p_n} = 0\\
        & \Minus b_n + \hat{\mu}_n^\Plus \Omega^*_\Plus + \hat{\delta}_n^\Plus \hat{\sigma}_n + \frac{\partial C_n(\cdot)}{\partial \alpha_n} = 0\\
        & \Minus w_n + \hat{\mu}_n^+ (\Omega^*_\Plus -\hat{\sigma}_{n}) - \hat{\xi}^\Plus_n \lambda_n^{\Plus *} + \frac{\partial C_n(\cdot)}{\partial \beta_n} = 0\\
        & \Big(\Minus p_n^{\max} + p_n + (\alpha_n - \beta_n)\hat{\sigma}_{n}+ \beta_n \Omega^*_\Plus \Big) \perp \hat{\mu}_n^\Plus = 0 \\
        & (p_n - p_{n}^{\max} + \alpha_n \hat{\sigma}_n) \perp \hat{\delta}_n^\Plus = 0 \\
        & \Big((\sigma_{\Omega}^2)^{\Minus 1} \Omega^*_\Plus -  \beta_n \lambda_n^{\Plus *}\Big) \perp \hat{\xi}_n^\Plus  = 0\\
        &  \hat{\delta}_n^\Plus \geq 0, ~ \hat{\mu}^\Plus_n\in \mathbb{R},~ \hat{\xi}^\Plus_n \in \mathbb{R}
    \end{align}
    \end{subequations}
    Based on these KKT conditions, we can derive $l_n,b_n,w_n$ as:
    \begin{subequations} \label{LDT-CC prices individual}
        \begin{align}
        & l_n = \frac{\partial C_n(\cdot)}{\partial p_n} + \hat{\mu}_n^\Plus + \hat{\delta}_n^\Plus \\
        & b_n = \frac{\partial C_n(\cdot)}{\partial \alpha_n} + \hat{\mu}_n^\Plus \Omega^*_\Plus + \hat{\delta}_n^\Plus \hat{\sigma}_n\\
        & w_n = \frac{\partial C_n(\cdot)}{\partial \beta_n} +\hat{\mu}_n^\Plus (\Omega^*_\Plus-\hat{\sigma}_{n}) - \hat{\xi}_n^\Plus \lambda_n^{\Plus *}
    \end{align}
    \end{subequations}
    By comparing the prices in \eqref{LDT-CC prices} with those in \eqref{LDT-CC prices individual}, we find that $\pi = l_n$, $\rho = b_n$, $\chi =w_n, \forall n \in \mathcal{N}$. Thus, the market clears at prices $(\pi, \rho, \chi)$ that enable each producer to maximize their profit.
\end{proof}
To better understand the components of price, we consider a specific production cost function for generator $n$:
$C_n(p_n,\alpha_n,\beta_n) = C_{1,n} (p_n + \boldsymbol{\Omega}\alpha_n) + C_{2,n} (p_n + \boldsymbol{\Omega} \alpha_n)^2 +  C^{\beta}_{n} \beta_n$. 
Then, each term of $\mathbb{E}_{\boldsymbol{\Omega}}[C_n(\cdot)]$ can be explicitly expressed as:
\begin{subequations}
    \begin{align}
    \mathbb{E}_{\boldsymbol{\Omega}}[C_{1,n} (p_n \Plus \boldsymbol{\Omega}\alpha_n)]& = C_{1,n} (p_n + \mu_{\Omega}\alpha_n)\\ 
    \mathbb{E}_{\boldsymbol{\Omega}}[C_{2,n} (p_n \Plus \boldsymbol{\Omega}\alpha_n)^2]& = C_{2,n} (p_n^2 + 2\mu_{\Omega}p_n + \sigma_{\Omega}^2\alpha_n^2)\\
    \mathbb{E}_{\boldsymbol{\Omega}}[C_{n}^{\beta} \beta_n]& = C_{n}^{\beta} \beta_n
    \end{align}
\end{subequations}
Considering $\mu_{\Omega}=0$, the expected total system cost is:
\begin{align}  
    \mathbb{E}_{\boldsymbol{\Omega}} \Big[ \sum_{n \in \mathcal{N}} C(p_n,\alpha_n,\beta_n) \Big] & \Equal \sum_{n\in \mathcal{N}} \big( C_{2,n} (p_n^2 + \sigma_{\Omega}^2 \alpha_n^2)\nonumber\\
    & \qquad + C_{1,n} p_n + C^{\beta}_{n} \beta_n \big) \label{cost-structure}
\end{align}
Based on this specific cost function, the price expressions in \eqref{LDT-CC prices} become:
\begin{subequations} \label{LDT-CC specific prices}
\begin{align}
    \pi & = C_{1,n} + 2 C_{2,n}p_n + \mu_n^{\Plus} + \delta_n^{\Plus} \label{LDT-CC-energy-price} \\
    \rho & = 2 C_{2,n} \alpha_n + (\mu^{\Plus}_n + \delta_n^{\Plus} )\hat{\sigma}_n \label{LDT-CC-regular-price} \\
    \chi &= C^{\beta}_n  + \mu^{\Plus}_n (\Omega^*_\Plus - \hat{\sigma}_n) - \xi_{n}^{\Plus} \lambda^{\Plus *}_{n} \label{LDT-CC-extreme-price}
\end{align}
\end{subequations} 

We can further analyze the factors influencing prices based on the expressions in \eqref{LDT-CC specific prices}. Both  energy price $\pi$ and  regular reserve price $\rho$ include $\delta_{n}^{\Plus}$ and $\mu_{n}^{\Plus}$, which are the dual multipliers of the regular CC in \eqref{LDT-CC-T1} and the extreme CC in \eqref{LDT-CC-R1}. This indicates that a generator’s decision to provide extreme reserves also impacts the prices for energy and regular reserves. In contrast, the extreme reserve price $\chi$ is unaffected by the binding status of the regular CC in \eqref{LDT-CC-T1} but is influenced by the binding status of the extreme CCs in \eqref{LDT-CC-R1} and \eqref{LDT-CC-R3}.

Additionally, we observe that $\pi$ is independent of the uncertainty and risk parameters, while $\rho$ internalizes the variance of uncertainty. Furthermore, $\chi$ captures both $\sigma_{\Omega}$, which characterizes normal deviations, and $\Omega^*_\Plus$, which characterizes large deviations. Consequently, extreme reserve prices are influenced by both the variance and the tail of the uncertainty distribution.

\begin{remark}
Prices in \eqref{LDT-CC specific prices} are derived from the resource perspective and capture the marginal cost of each generator. We can also derive the prices for energy and regular reserve from the perspective of the whole system, as:
\begin{subequations} \label{LDT-CC system-wise prices}
\begin{align}
    \pi &= \Bigg[ D- \hat{W} - \sum_{n \in \mathcal{N}} \frac{(C_{1,n} + \mu_n^\Plus + \delta_n^\Plus)}{2 C_{2,n}} \Bigg] \Bigg/ \sum_{n \in \mathcal{N}} \frac{1}{2 C_{2,n}} \label{LDT-CC system-wise energy prices}\\
    \rho &= \Bigg[ 1 - \sum_{n \in \mathcal{N}} \frac{( \mu_n^\Plus \hat{\sigma}_{n} + \delta_n^\Plus \hat{\sigma}_n)}{2 C_{2,n} \sigma_{\Omega}^2} \Bigg] \Bigg/ \sum_{n \in \mathcal{N}} \frac{1}{2 C_{2,n} \sigma_{\Omega}^2} \label{LDT-CC system-wise regular reserve prices}
\end{align}
\end{subequations} 
The expressions in \eqref{LDT-CC system-wise energy prices} and \eqref{LDT-CC system-wise regular reserve prices} are obtained by eliminating decision variables $p_n$ and $\alpha_n$ from \eqref{LDT-CC-energy-price} and \eqref{LDT-CC-regular-price}, respectively. We begin by summing over $n$ on both sides of \eqref{LDT-CC-energy-price} and \eqref{LDT-CC-regular-price}, and then substituting $\sum_{n \in \mathcal{N}} p_n$ with $D - \hat{W}$ and substituting $\sum_{n \in \mathcal{N}} \alpha_n$ with 1, respectively. Therefore, in a fully-competitive market, i.e., when the two conditions in Theorem~\ref{theo1} hold, the expressions in \eqref{LDT-CC specific prices} and \eqref{LDT-CC system-wise prices} should be equivalent.
\end{remark}

\subsection{Market-Clearing Based on LDT-WCC-ED}
\label{Subsection:LDT-WCC-pricing}
Since the LDT-WCC-ED model in \eqref{LDT-WCC-Model} includes the same constraints \eqref{LDT-CC-demand}-\eqref{LDT-CC-reserve_ad} as the LDT-CC-ED model in \eqref{LDT-CC}, the definitions of energy, regular reserve, and extreme reserve prices here are consistent with those in Section~\ref{Subsection:LDT-CC-pricing}. Therefore, in this subsection, we can perform a similar analysis based on LDT-WCC-ED, including deriving price formation and establishing market equilibrium.

We first derive the prices based on the LDT-WCC-ED model in \eqref{LDT-WCC-Model}, as stated in the following proposition:
\begin{proposition} \label{prop2}
   (Pricing based on LDT-WCC-ED) The optimal values of $\pi$, $\rho$, and $\chi$ based on \eqref{LDT-WCC-Model} are given by:
    \begin{subequations} \label{WCC-pricing}
    \begin{align} 
        \pi& = \frac{\partial C_n(\cdot)}{\partial p_n} + \left.\nu_{n}^{\Plus} \frac{\partial f(\cdot)}{\partial \Tilde{\mu}_n} \frac{\partial \Tilde{\mu}_n}{\partial p_n} \right|_{\text{I, II}} \\
        \rho &= \frac{\partial C_n(\cdot)}{\partial \alpha_n} + \left.\nu_{n}^{\Plus} \frac{\partial f(\cdot)}{\partial \Tilde{\mu}_n} \frac{\partial \Tilde{\mu}_n}{\partial \alpha_n}\right|_{\text{I, II}}  + \left.\nu_{n}^{\Plus} \frac{\partial f(\cdot)}{\partial \Tilde{\sigma}_n} \frac{\partial \Tilde{\sigma}_n}{\partial \alpha_n}\right|_{\text{I, II}} \\
        \chi &= \frac{\partial C_n(\cdot)}{\partial \beta_n} + \left.\nu_{n}^{\Plus} \frac{\partial f(\cdot)}{\partial \Tilde{\mu}_n} \frac{\partial \Tilde{\mu}_n}{\partial \beta_n}\right|_{\text{I, II}} + \left.\nu_{n}^{\Plus} \frac{\partial f(\cdot)}{\partial \Tilde{\sigma}_n} \frac{\partial \Tilde{\sigma}_n}{\partial \beta_n}\right|_{\text{I, II}}
    \end{align}
    \end{subequations}   
where the expressions for $\Tilde{\mu}n$ and $\Tilde{\sigma}n$ are given in \eqref{LDT-WCC-Mean} and \eqref{LDT-WCC-Variance}, and $f$ is defined in \eqref{unit WCC}. The derivatives of $f$ with respect to $\Tilde{\mu}_n$ and $\Tilde{\sigma}_n$ are provided in \eqref{df_dmu} and \eqref{df_dtheta}. The subscript \text{I,II} indicates that the expression consists of two parts, corresponding to the cases when $\boldsymbol{\Omega} \leq \Omega_{\epsilon}^{\Plus} $ and $\boldsymbol{\Omega} > \Omega_{\epsilon}^{\Plus} $, respectively.
\end{proposition}
\begin{proof}
   Similar to the proof of Proposition~\ref{prop1}, the stationary conditions of \eqref{LDT-WCC-Model} can be derived as follows:
    \begin{subequations} \label{ref1}
    \begin{align}
        & \frac{\partial C_n(\cdot)}{\partial p_n} + \left. \nu_{n}^{\Plus} \frac{\partial f(\cdot)}{\partial \Tilde{\mu}_n} \frac{\partial \Tilde{\mu}_n}{\partial p_n}\right|_{\text{I, II}} - \pi = 0 \label{ref11}\\
        &\frac{\partial C_n(\cdot)}{\partial \alpha_n}  -\rho \Plus\left. \nu_{n}^{\Plus} \Bigg( \frac{\partial f(\cdot)}{\partial \Tilde{\mu}_n} \frac{\partial \Tilde{\mu}_n}{\partial \beta_n} \Plus \frac{\partial f(\cdot)}{\partial \Tilde{\sigma}_n} \frac{\partial \Tilde{\sigma}_n}{\partial \alpha_n} \Bigg)\right|_{\text{I, II}} = 0 \label{ref12}\\
        & \frac{\partial C_n(\cdot)}{\partial \beta_n} - \chi \Plus \left.\nu_{n}^{\Plus} \Bigg( \frac{\partial f(\cdot)}{\partial \Tilde{\mu}_n} \frac{\partial \Tilde{\mu}_n}{\partial \beta_n}\Plus\frac{\partial f(\cdot)}{\partial \Tilde{\sigma}_n} \frac{\partial \Tilde{\sigma}_n}{\partial \beta_n} \Bigg)\right|_{\text{I, II}} = 0 \label{ref13},
    \end{align}
    \end{subequations}
    Thus, $\pi$, $\rho$ and $\chi$ can be expressed directly from \eqref{ref1}
\end{proof}
Next, we analyze how the prices derived from LDT-WCC-ED lead to a competitive equilibrium problem:
\begin{theorem} \label{theo2}
   (Market equilibrium based on LDT-WCC-ED) Let $\{p_n^*, \alpha_n^*, \beta_n^*\}$ denote the optimal solution to \eqref{LDT-WCC-Model} and let $\{ \pi^*, \rho^*, \chi^* \}$ be the  corresponding dual variables. Then $\{ \{ p_n^*, \alpha_n^*, \beta_n^* ~\forall n\}, \pi^*, \rho^*, \chi^* \}$ constitutes a market equilibrium, i.e.:
   \begin{itemize}
        \item The marker clears at $\sum p_n^* = D - \hat{W}$, $\sum \alpha_n^* =1$, and $\sum \beta_n^* =1$.
        \item Each producer maximizes its profit under the payment $\Gamma_n = \pi^* p_n^* + \rho^* \alpha_n^* + \chi^* \beta_n^*.$
    \end{itemize}
\end{theorem}

\begin{proof}
    The first condition can be easily proven using similar steps as in the proof of Theorem~\ref{theo1}. Next, we formulate the profit maximization problem for generator $n$ as follows:
    \begin{subequations} \label{WCC-KKT-firm}
        \begin{align}
        \max_{p_n,\alpha_n,\beta_n} ~&  - C_n(p_n,\alpha_n,\beta_n) + l_n p_n + b_n \alpha_n + w_n \beta_n \\
        \text{s.t.} ~ & ~ p_n, \alpha_n, \beta_n  \geq 0 \\
        (\hat \nu_n^\Plus): & \int_{\Minus\infty}^{\Omega_{\epsilon}^\Plus } \int_{0}^{\infty} y_n^\Plus \mathcal{P}(y_n^\Plus|\boldsymbol{\Omega}) \mathcal{P}(\boldsymbol{\Omega}) dy_n^\Plus d\boldsymbol{\Omega} \nonumber\\
        & + \int_{\Omega_{\epsilon}^\Plus }^{\infty} \int_{0}^{\infty} y_n^\Plus \mathcal{P}(y_n^\Plus|\boldsymbol{\Omega}) \mathcal{P}(\boldsymbol{\Omega}) dy_n^\Plus d\boldsymbol{\Omega}\leq \epsilon_n
    \end{align}
    \end{subequations}
    Based on the stationary conditions for \eqref{WCC-KKT-firm}, the prices $(l_n,b_n,w_n)$ can be derived as follows:
    \begin{subequations} \label{LDT-WCC prices individual}
    \begin{align} 
        l_n~& = \frac{\partial C_n(\cdot)}{\partial p_n} + \left. \hat\nu_{n}^{\Plus} \frac{\partial f(\cdot)}{\partial \Tilde{\mu}_n} \frac{\partial \Tilde{\mu}_n}{\partial p_n}\right|_{\text{I, II}}\\
        b_n ~&= \frac{\partial C_n(\cdot)}{\partial \alpha_n} +\left. \hat \nu_{n}^{\Plus} \Bigg( \frac{\partial f(\cdot)}{\partial \Tilde{\mu}_n} \frac{\partial \Tilde{\mu}_n}{\partial \alpha_n} + \frac{\partial f(\cdot)}{\partial \Tilde{\sigma}_n} \frac{\partial \Tilde{\sigma}_n}{\partial \alpha_n} \Bigg)\right|_{\text{I, II}}\\
        w_n ~&= \frac{\partial C_n(\cdot)}{\partial \beta_n} +\left. \hat\nu_{n}^{\Plus} \Bigg(\frac{\partial f(\cdot)}{\partial \Tilde{\mu}_n} \frac{\partial \Tilde{\mu}_n}{\partial \beta_n} + \frac{\partial f(\cdot)}{\partial \Tilde{\sigma}_n} \frac{\partial \Tilde{\sigma}_n}{\partial \beta_n} \Bigg)\right|_{\text{I, II}}
    \end{align}
    \end{subequations}
    By comparing the prices in \eqref{WCC-pricing} with those in \eqref{LDT-WCC prices individual}, we find that $\pi = l_n$, $\rho = b_n$, $\chi =w_n, \forall n \in \mathcal{N}$. Thus, the market clears at prices $(\pi, \rho, \chi)$ that enable each producer to maximize their profit.
\end{proof}
Due to the presence of the truncated Gaussian distribution in the LDT-WCC, the prices $\pi$, $\rho$, and $\chi$ depend on $\Phi(z_n)$ and $\phi(z_n)$, where $z_{n} = -\Tilde{\mu}_n / \Tilde{\sigma}_n$. As a result, the price expressions in \eqref{WCC-pricing} cannot be simplified in the same manner as in \eqref{LDT-CC specific prices}. This limitation prevents a clear interpretation of the price components and a theoretical comparison between the prices derived from LDT-CC-ED and LDT-WCC-ED. In Section~\ref{Sec:CaseStudy}, we will compare these prices through numerical experiments.

\subsection{Market Properties of LDT-CC-ED and LDT-WCC-ED}
\label{Subsection:market properties}
Using the cost function $C_n(p_n, \alpha_n, \beta_n) = C_{1,n} (p_n + \boldsymbol{\Omega}\alpha_n) + C_{2,n} (p_n + \boldsymbol{\Omega} \alpha_n)^2 +  C^{\beta}_{n} \beta_n$, we first analyze market properties, including cost recovery and revenue adequacy, based on LDT-CC-ED.

\subsubsection{Cost Recovery} Cost recovery refers to the ability of producers to recover their operational cost from the market outcomes. It is formalized as, $\Pi_n \geq 0$, $\forall n \in \mathcal{N}$, where $\Pi_n = \pi p_n + \rho \alpha_n + \chi \beta_n - C_{1,n} p_n - C_{2,n}(p_n^2 + \sigma_{\Omega}^2 \alpha_n^2) - C^{\beta}_{n} \beta_n$.
Theorem \ref{theo1} guarantees full cost recovery for each producer, i.e., $\Pi_n^* = 0$, under the competitive equilibrium. Since each producer problem in \eqref{LDT-KKT-firm} is convex, we can apply the strong duality theorem to calculate the optimal market outcomes as discussed in \cite{dvorkin2019chance}.
\subsubsection{Revenue Adequacy}
Revenue adequacy refers to the market ability to ensure that the total payments received from consumers are sufficient to cover the total payments to producers. The proposed market design follows the same principles as those in \cite{dvorkin2019chance}, and results in a revenue inadequate market. The total market revenue deficit is given by:
\begin{align}
    \Delta^*=-\min \big[ 0, \sum_{n} \Gamma_n + \pi^* \hat{W} - \pi^* D \big] \label{RA}
\end{align}
From Theorem \ref{theo1} we define $\Gamma_n = \pi^* p_n + \rho^* \alpha_n^* + \chi^* \beta_n^*$, and establish that  $\sum_{n} \alpha_n^* = 1$, $\sum_{n} \beta_n^* = 1$, and $\sum_{n} p_n^* = (D - \hat{W})$. Then, \eqref{RA} can be expressed as:
\begin{align}
    \Delta^*=-\min \big[ 0, - \rho^* - \chi^* \big]
\end{align}
Since $\rho^*$ and $\chi^* \geq 0 $, the market revenue in \eqref{RA} results in a deficit $\Delta^* \geq 0$. Hence, the market design requires further allocation among customers \cite{dvorkin2019chance}.

Similarly, due to the convexity of the producer's problem in \eqref{WCC-KKT-firm}, using strong duality, Theorem \ref{theo2} ensures full cost recovery by each producer under a competitive equilibrium. Also, the proposed market design follows the same principles as those in \eqref{RA}, resulting in a revenue inadequate market with a total deficit $\Delta^* = \Minus \min[0, \Minus \rho^* \Minus \chi^*].$

\section{Network-Constrained Extension} \label{Sec:NetworkExtension}

This section studies the extension of the proposed LDT-CC-ED model by including network information and power flow constraints. In Section~\ref{Subsection:System-Wide-Reserve}, we introduce network constraints into the LDT-CC-ED model with a system-wide reserve requirement and demonstrate that (i) energy prices take the form of locational marginal prices (LMPs) and (ii) the results of Propositions \ref{prop1} and Theorem \ref{theo1} remain valid. The additional network constraints lead to the LDT-CC optimal power flow (LDT-CC-OPF) model that captures the locational impacts of wind power uncertainties on the scheduling of both regular and extreme reserves. To enhance real-time reserve deliverability, we further extend the LDT-CC-OPF model to incorporate location-specific reserve requirements in Section~\ref{Subsection:Location-Specific-Reserve}. This extension provides more detailed information of uncertainty to the reserve scheduling model, leading to more efficient reserve allocation.

\subsection{System-Wide Reserve Requirements}
\label{Subsection:System-Wide-Reserve}
We first consider reserve scheduling based on $\Omega$, i.e., the total uncertainty in the system. The LDT-CC-OPF model with system-wide reserve requirements is formulated as:
\begin{subequations} \label{LDTED_Network}
\begin{align}
\min_{p,\alpha,\beta, \Omega^*_\Plus, \lambda^{\Plus *}} ~& \mathbb{E}_{\Omega} \big[\sum_{n \in \mathcal{N}} C_n(p_n,\alpha_n, \beta_n) \big]\\
\text{s.t.}~& ~\text{\eqref{LDT-CC-T0}}-\text{\eqref{LDT-CC-R3}},~\text{\eqref{LDT-CC-reserve_ad}}-\text{\eqref{LDT-CC-reserve_op}} \nonumber \\
(\pi_i): ~ &\sum_{n \in \mathcal{N}_{i}} p_{n} + \sum_{j \in \mathcal{L}^{+}_{i}} f_{jk} \nonumber \\
&  - \sum_{j \in \mathcal{L}^{-}_{i}} f_{jk} = d_i - \hat W_i \quad \forall i \in \mathcal{I} \label{LDTED_net-demand}\\
(\eta_{jk}^{\Minus}):~& - f^{\max}_{jk} \leq f_{jk} \quad \forall (j,k) \in \mathcal{L} \label{LDTED_net_flow_max}\\
(\eta_{jk}^{\Plus}):~& f_{jk} \leq f^{\max}_{jk}  \quad \forall (j,k) \in \mathcal{L} \label{LDTED_net_flow_min} \\
(\eta_{jk}^{0}):~&  B_{jk}(\theta_j - \theta_k) = f_{jk}  \quad \forall (j,k) \in \mathcal{L}, \label{LDTED_network_angles}
\end{align}
\end{subequations}
where set $\mathcal{I}$ collects all the nodes, $\mathcal{L}$ collects all the lines, $\mathcal{L}^{\Plus}_i$ and $\mathcal{L}^{\Minus}_i$ are the sets of lines that are connected from the node $i$ and to the node $i$, respectively. Eq. \eqref{LDTED_net-demand} is the power balance constraint that replaces \eqref{LDT-CC-demand}. Eq. \eqref{LDTED_net_flow_max} and \eqref{LDTED_net_flow_min} define the line power flow limits, and\eqref{LDTED_network_angles} establishes the relationship between node angles and line power flow.

\begin{theorem}
    Consider the model in \eqref{LDTED_Network}. Then (i) energy prices $\pi$ from Proposition \ref{prop1} (energy price) become LMPs $\pi_i$ and (ii) the results of Propositions \ref{prop1} (reserve prices) and Theorem \ref{theo1} (market equilibrium) remain valid.
\end{theorem}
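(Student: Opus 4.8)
The plan is to write down the Lagrangian of the network-constrained model \eqref{LDTED_Network} and extract its stationarity conditions, mirroring the proof of Proposition~\ref{prop1}. First I would observe that the only structural change relative to \eqref{LDT-CC} is that the single system-wide balance \eqref{LDT-CC-demand} with multiplier $\pi$ is replaced by the nodal balances \eqref{LDTED_net-demand} with multipliers $\pi_i$, together with the new flow variables $f_{jk}$ and angle variables $\theta_i$ and their associated constraints \eqref{LDTED_net_flow_max}--\eqref{LDTED_network_angles}. Since the cost objective and all generator-limit, LDT, and reserve-balance constraints \eqref{LDT-CC-T0}--\eqref{LDT-CC-R3}, \eqref{LDT-CC-reserve_op}--\eqref{LDT-CC-reserve_ad} are unchanged and do \emph{not} couple generators across nodes, the stationarity condition $\partial \mathcal{L}/\partial p_n = 0$ for a generator $n$ located at bus $i$ reads exactly as \eqref{LDT-KKT-C1} but with $\pi$ replaced by $\pi_i$; the conditions $\partial \mathcal{L}/\partial \alpha_n = 0$ and $\partial \mathcal{L}/\partial \beta_n = 0$ are literally \eqref{LDT-KKT-C2}--\eqref{LDT-KKT-C3}, since $\rho$ and $\chi$ attach to the single reserve-balance rows \eqref{LDT-CC-reserve_op}--\eqref{LDT-CC-reserve_ad} which are untouched. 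This proves claim (i): the energy price seen by generator $n$ is its own nodal multiplier $\pi_i$, i.e.\ an LMP, and the reserve-price formulas of Proposition~\ref{prop1} carry over verbatim.

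Next I would verify that $\pi_i$ indeed behaves as a locational marginal price by also writing the stationarity conditions with respect to $f_{jk}$ and $\theta_i$. Differentiating $\mathcal{L}$ with respect to $f_{jk}$ gives $\pi_j - \pi_k + \eta_{jk}^{+} - \eta_{jk}^{-} - \eta_{jk}^{0} = 0$ (signs depending on the orientation convention in $\mathcal{L}_i^{+},\mathcal{L}_i^{-}$), and differentiating with respect to $\theta_i$ gives the usual condition $\sum_{k} B_{jk}\,\eta_{jk}^{0} - \sum_{k} B_{kj}\,\eta_{kj}^{0} = 0$ tying the angle multipliers together. Together these recover the standard DC-OPF decomposition $\pi_i = \pi_{\text{ref}} + (\text{congestion component})$, confirming that $\{\pi_i\}$ is the LMP vector. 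None of this interacts with the LDT block, so it is routine bookkeeping.

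Finally, for claim (ii) I would re-run the proof of Theorem~\ref{theo1}. Feasibility of an optimal primal point forces all equality rows of \eqref{LDTED_Network}, so the market-clearing identities hold nodally: $\sum_{n\in\mathcal N_i} p_n^* + (\text{net flow into }i) = d_i$ for every $i$, together with $\sum_n \alpha_n^* = 1$ and $\sum_n \beta_n^* = 1$; summing the nodal balances over $i$ and noting the flow terms telescope recovers $\sum_n p_n^* - \hat W = D$. For the profit-maximization part, I would pose the producer-$n$ subproblem exactly as \eqref{LDT-KKT-firm} but with the per-unit revenue term $l_0 p_n$ where now $l_0$ plays the role of the \emph{local} price $\pi_i$; its KKT conditions reproduce, component by component, the operator's stationarity conditions \eqref{LDT-KKT-C1}--\eqref{LDT-KKT-C3} with $(\pi,\rho,\chi)$ replaced by $(\pi_i,\rho,\chi)$, so each producer's optimum coincides with the centralized dispatch $(p_n^*,\alpha_n^*,\beta_n^*)$ under the payment $\Gamma_n = \pi_i^* p_n^* + \rho^* \alpha_n^* + \chi^* \beta_n^*$. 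Cost recovery and the revenue-adequacy deficit expression then follow from strong duality of the (convex) producer subproblem exactly as before.

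The only genuine subtlety — and the step I expect to require the most care — is the sign and orientation bookkeeping in the $f_{jk}$ and $\theta_i$ stationarity conditions, because the sets $\mathcal{L}_i^{+}$ and $\mathcal{L}_i^{-}$ enter \eqref{LDTED_net-demand} with opposite signs and one must be consistent with the direction convention used in \eqref{LDTED_network_angles}; getting this right is what makes the congestion decomposition of $\pi_i$ come out with the correct signs. Everything else is a direct transcription of the arguments already given for Proposition~\ref{prop1} and Theorem~\ref{theo1}, since the network constraints are separable from the LDT/reserve structure and do not alter the generators' first-order conditions beyond localizing the energy price.
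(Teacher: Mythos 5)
Your proposal is correct and follows essentially the same route as the paper: write the KKT stationarity conditions of the network-constrained model, observe that the $p_n$, $\alpha_n$, $\beta_n$ conditions are identical to those of \eqref{LDT-KKT} except that $\pi$ is replaced by the nodal multiplier $\pi_i$, and conclude that the price formulas and the equilibrium argument carry over. You in fact supply slightly more detail than the paper (the explicit $f_{jk}$/$\theta$ stationarity, the telescoping of flows, and the re-run of the producer subproblem with the local price), which is consistent with, and fills in, what the paper leaves implicit.
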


\begin{proof}
The KKT stationary conditions of \eqref{LDTED_Network} are:
\begin{subequations} \label{LDT-KKT-net}
\begin{align}
    \frac{\partial \mathcal{L}}{\partial p_n}:~& \frac{\partial C_n(\cdot)}{\partial p_n} + \mu_{n}^{\Plus} + \delta_{n}^{\Plus} - \pi_i = 0 \label{R11}\\
    \frac{\partial \mathcal{L}}{\partial \alpha_n}:~&\frac{\partial C_n(\cdot)}{\partial \alpha_n} -\mu_{n}^{\Plus} \hat{\sigma}_{n} + \delta_{n}^{\Plus} \hat{\sigma}_n -\rho = 0 \label{R12}\\
    \frac{\partial \mathcal{L}}{\partial \beta_{n}}:~&\frac{\partial C_n(\cdot)}{\partial \beta_{n}} -\mu_{n}^{\Plus} (\Omega^*_\Plus - \hat{\sigma}_{n}) + \xi^\Plus_{n} \lambda^{\Plus *}_n - \chi = 0 \label{R13}\\
    \frac{\partial \mathcal{L}}{\partial \theta_j}:~& B_{jk} \eta_{jk}^{0} = 0\\
    \frac{\partial \mathcal{L}}{\partial f_{jk}}:~& B_{jk}(\pi_j - \pi_k) + \eta_{jk}^{\Plus} - \eta_{jk}^{\Minus} - \eta_{jk}^{0} = 0,
\end{align}
\end{subequations}
The KKT stationary conditions associated with energy and reserve balances in \eqref{R11}-\eqref{R13} are analogous to those in \eqref{LDT-KKT}. Thus, $\pi_i$ can be obtained directly from \eqref{R11}, while $\rho$ and $\chi$ can be derived from \eqref{R12} and \eqref{R13}, respectively. Consequently, Proposition \ref{prop1} remains valid for \eqref{LDTED_Network}.
\end{proof}
\subsection{Location-Specific Reserve Requirements}
\label{Subsection:Location-Specific-Reserve}
Instead of setting a system-wide reserve requirement based on the total uncertainty $\Omega$, we can increase the granularity by scheduling reserves according to $\Omega_i$, where $ \boldsymbol{\Omega}_{i} = \sum_{n' \in \mathcal{W}_{i}} \boldsymbol{\omega}_{n'}, \forall i \in \mathcal{I}$ is the aggregated wind power uncertainty at the nodal level. The LDT-CC-OPF model with location-specific reserve requirements is then given by:
\begin{subequations} \label{LDTED_Network_locational}
\begin{align}
\min_{p, A, B,\Omega^*_\Plus,\lambda^{\Plus *}}&\mathbb{E}_{\Omega} \sum_{n \in \mathcal{N}} C_n(p_n, A_n, B_n)\\
 \text{s.t.}&~ \eqref{LDTED_net-demand}-\eqref{LDTED_network_angles} \nonumber \\
&~\forall n ~\Big\{A_{n}, B_{n}, p_n \geq 0, \lambda^{\Plus *}_n > 0  \label{LDT_locational_net-T0}\\
(\delta_{n}^{\Plus}):~& p_n - p_{n}^{\max} + G(A_n) \leq 0 \label{LDT_locational_net-T1}\\
(\mu_{n}^{\Plus}):~& p_n + G(A_n \Minus B_n) + B_n \Omega^*_\Plus = p_n^{\max}\label{LDT_locational_net-R1}\\
(\xi_n^\Plus):~&\Sigma_{\Omega}^{\Minus1} \Omega^*_\Plus - B_n \lambda^{\Plus *}_n = 0 \Big\} \label{LDTED_network-_R3}\\
(\kappa^\Plus):~& \Minus\Sigma_{\Omega}^{\Minus\frac{1}{2}} \Omega^*_\Plus - \Phi^{\Minus 1}(\epsilon^{\text{ext}}) \leq 0 \label{LDTED_network-OO} \\
(\rho_i):~&\sum_{n\in \mathcal{N}} A_{ni} = 1 \qquad \forall i \In \mathcal{I}\\
(\chi_i):~&\sum_{n\in \mathcal{N}} B_{ni} = 1 \qquad \forall i \In \mathcal{I},
\end{align}
\end{subequations}
Model \eqref{LDTED_Network_locational} shares the same set of power flow constraints as in \eqref{LDT-CC}, but it distinguishes the reserve contribution of each generator to hedge uncertainty at different nodes. Matrices $A = [A_{(n,i)}]_{n \in \mathcal{N}, i \in \mathcal{I}}$ and $B = [B_{(n,i)}]_{n \in \mathcal{N}, i \in \mathcal{I}}$ represent the regular and extreme reserve participation factors, respectively, for each generator $n$ in controlling wind deviations at each node $i$. For clarity, we use $A_n = A_{(n,\cdot)}$ and $B_n = B_{(n,\cdot)}$ to denote the $n$-th columns of $A$ and $B$, respectively, and $A_{(\cdot,i)}$ and $B_{(\cdot,i)}$ to represent the $i$-th row. 
In addition, the statistical characteristics of uncertainty at each node need to be quantified separately, where $\Sigma_{\Omega}$ is the covariance matrix with diagonal $\sigma_{(\Omega,i)}^2$, $\Omega^*_\Plus$ is a vector of dominating points in the set $\boldsymbol{\Omega}$ for each node $i$. For simplicity, we use $G(X_n) = \Phi^{\Minus1}(1\Minus\epsilon_n) \sqrt{X_{n}^{\top} \Sigma_{\Omega} X_{n}}$. This model will be used in Section~\ref{Subsection:ISONE_case} to conduct numerical experiments on the 8-node ISO New England system.

\section{Case Study}
\label{Sec:CaseStudy}

In this section, we test the performance of the proposed market-clearing models through numerical experiments. We begin with a single-node illustrative example in Section~\ref{Subsection:illustrative_case} to compare energy dispatch, reserve allocations, and the corresponding prices across the CC-ED, LDT-CC-ED and LDT-WCC-ED formulations. Then, we expand this comparison to the 8-node ISO New England system in Section~\ref{Subsection:ISONE_case}, demonstrating the scalability of the formulations and incorporating locational variability into the analysis. All simulations were carried out in Python using the Gurobi solver \cite{Gurobi2022}. Since all our models are convex, all problems were solved with a duality gap $< 0.01\%$. The code for the case study is available at \cite{Tapia2024}. 

\subsection{Illustrative Example} \label{Subsection:illustrative_case}
We first consider an illustrative single-node system with three controllable generators and one wind farm, as shown in Fig.~\ref{fig:illustrative}). The total demand is 270 MW, while the cost of energy not served is 9000 $\$$/MWh. The quadratic component of the cost is $C_2^{\top} = [0.01,0.05,0.025]$ $\$$/MWh$^2$, while the linear component is $C_1^{\top} = [10,35, 50]$ $\$$/MWh. The extreme reserve cost is $C_{\beta}^{\top}= [700,300,600]$ $\$$. The maximum capacities of generators are $(p^{\max})^{\top} = [75, 160, 120]$ MW. Finally, the wind power forecast $\hat{W}$ is 150 MW, and the forecast error is zero-mean with $\sigma_{\Omega}=50$ MW. The risk tolerance is set as $\epsilon_n = 0.05$ and $\epsilon^{\text{ext}}_n = 5\times10^{-5}, \forall n$.
\begin{figure}[!t]
    \centering
    \includegraphics[width=0.27\textwidth]{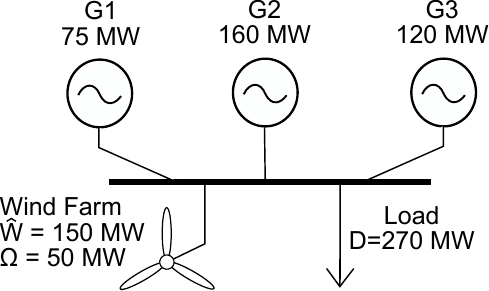}
    \caption{Illustrative single-node system.}
    \label{fig:illustrative}
\end{figure}

\begin{table} [!t]
\centering
\caption{Illustrative Example Optimal Dispatch Results} 
\label{Tab:StudyCases1Primal} 
\begin{tabular}
{ p{1.3cm}||p{1.4cm}|p{1.2cm}|p{1.2cm}|p{1.2cm}}
 \hline
 \multirow{2}{*}{Model} &\multicolumn{4}{|c}{ Energy ($p$) \& Reserves ($\alpha, \beta$) dispatch }\\
  & Variables & G1  & G2 & G3 \\
 \hline
 \multirow{2}{*}{CC} & $p^*$ [MW] & 75  & 45  & 0 \\
 & $\alpha^*$ [$\%$]  & 0   & 33   & 67 \\
  \hline
\multirow{3}{*}{LDT-WCC} & $p^*$ [MW]  & 75 & 45 & 0 \\
 & $\alpha^*$ [$\%$]  & 0  & 44  & 56 \\
 & $\beta^*$ [$\%$]   & 0  & 100  & 0 \\
\hline
\multirow{3}{*}{LDT-CC} & $p^*$ [MW] & 75 & 45  & 0 \\
 & $\alpha^*$ [$\%$]  & 0  & 0 & 100 \\
 & $\beta^*$ [$\%$]   & 0  & 75  & 25 \\
 \hline
\end{tabular}
\end{table}

Table \ref{Tab:StudyCases1Primal} compares the optimal dispatch for each model in terms of the energy, regular and extreme reserve allocations. All models assign the same energy dispatch but differ in reserve allocations, which are driven by different model conservatism. Recall that the LDT-WCC and LDT-CC models incur an additional cost at the scheduling stage due to the provision of extreme reserves. We note that LDT-WCC allocates slightly more regular reserve to G2 than the CC model, with G2 providing $44\%$ of the regular reserve and $100\%$ of the extreme reserve. In constrast, LDT-CC assigns only $75\%$ of the extreme reserve to G2, while G3, with a higher production cost, complements this. The less restrictive requirements in LDT-WCC allows for minimizing the cost of reserve provision by allowing G2 to provide more than $40\%$ of the regular and the whole extreme reserve requirement. In contrast, LDT-CC is more restrictive in its reserve constraints, causing both G2 and G3 to provide extreme reserve.

\begin{table} [!t]
\centering
\caption{Optimal Prices and Total System Cost}
\label{Tab:StudyCases1Dual_p} 
\begin{tabular}{l||l|l|l|l}
\hline
\multirow{2}{*}{Model}&\multicolumn{4}{|c}{Energy ($\pi$) \& Reserves ($\rho, \chi$) prices}\\
 & $\pi^*$ [$\$$/MW]& $\rho^*$ [$\$$/$\%$] & $\chi^*$ [$\$$/$\%$] & T. Cost [$\$$]\\
 \hline
 CC   & 39.20  & 83.33 &  - & 2524.17\\
 LDT-WCC & 39.50 & 109.25 & 300.00 & 2826.16\\
 LDT-CC  & 41.47 & 125.74 & 601.37 & 2919.15\\
 \hline
\end{tabular}
\end{table} 

Table \ref{Tab:StudyCases1Dual_p} presents energy and reserve prices along with the total system cost. As expected, the total system cost at the scheduling stage increases with model conservatism. Specifically, the cost for the LDT-WCC and LDT-CC models are $12.0\%$ and $15.6\%$ higher than the CC benchmark, respectively. The differences between the energy dispatch and reserve allocation in Table \ref{Tab:StudyCases1Primal} are reflected in the prices in Table \ref{Tab:StudyCases1Dual_p}. Compared to the CC model, the energy price increases slightly for LDT-WCC and LDT-CC by $0.7\%$ and $5.6\%$, while the regular reserve price increases by $31.1\%$ and $50.8\%$, respectively.

\subsection{ISO New England Case Study} \label{Subsection:ISONE_case}
We further extend the numerical experiment to a network-constrained system. Fig.~\ref{fig:enter-label} shows the 8-zone ISO New England system used in this study with the data from \cite{Krishnamurthy2016ISONEData}. The wind power forecast is $\hat{W} = 3600$ MW, and the forecast error is zero-mean with $\sigma_{\Omega}=1100$ MW. We set $\epsilon_n = 0.05$ and $\epsilon^{\text{ext}}_n = 5\times10^{-5}, \forall n$.
\begin{figure}[!t]
    \centering
    \includegraphics[width=0.27\textwidth, trim={140pt 85pt 0 72pt},clip]{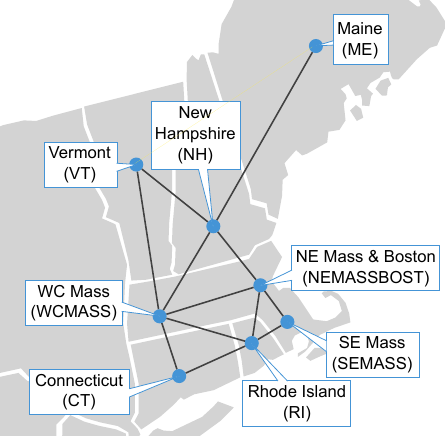}
    \caption{8-zones ISO New England system \cite{Krishnamurthy2016ISONEData}.}
    \label{fig:enter-label}
\end{figure}
Fig.~\ref{fig:extendedplots} summarizes the optimal dispatch for all three models in terms of energy, regular and extreme reserve allocated to generators in each zone. We observe that the energy dispatch remains consistent across all three models. The zones allocated to provide regular and extreme reserve are the same, but the allocations differ based on how the model addresses the burden of coping with extreme events. For instance, compared to CC, LDT-WCC increases the regular reserve allocation in ME, while LDT-CC increases it in RI. However, the allocation of extreme reserve remains unchanged. LDT-CC diversifies the reserve provision by assigning more than $75\%$ to three different zones, whereas LDT-WCC allocates this reserve only in ME, leveraging cheaper generators.
\begin{figure}[!t]
     \begin{subfigure}[b]{0.49\textwidth} 
         \centering
         \includegraphics[width=\textwidth, trim={0 6pt 0 0},clip]{./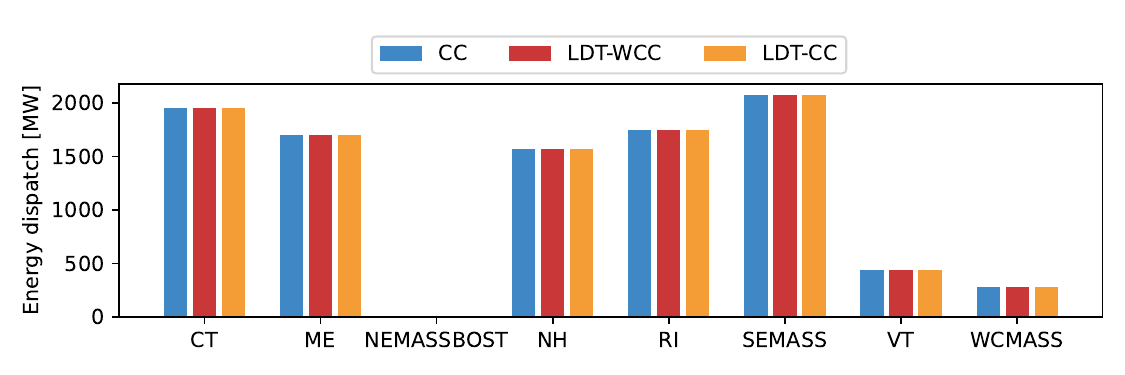}
         \caption{Energy dispatch}
         \smallskip
         \label{fig:F3a}
     \end{subfigure}
     \begin{subfigure}[b]{0.49\textwidth} 
         \centering
         \includegraphics[width=\textwidth, trim={0 6pt 0 40pt},clip]{./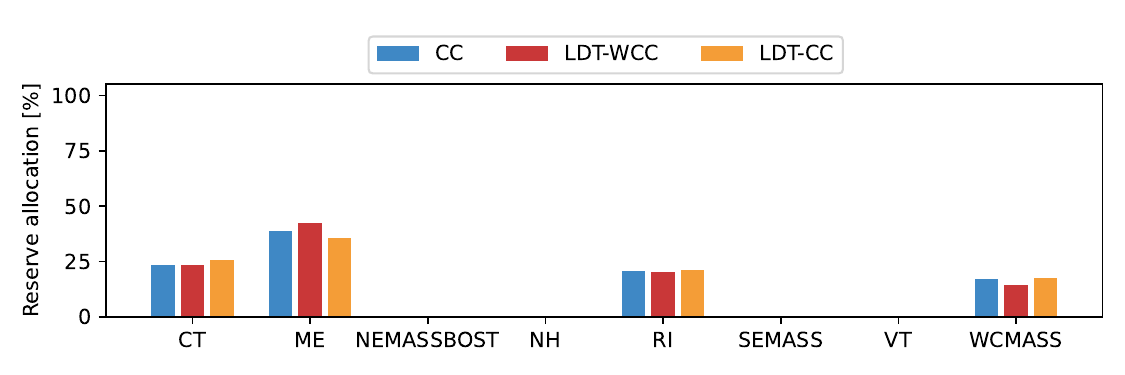}
         \caption{Regular reserve}
         \smallskip
         \label{fig:F3b}
     \end{subfigure}
     \begin{subfigure}[b]{0.49\textwidth}
         \centering
         \includegraphics[width=\textwidth, trim={0 6pt 0 40pt},clip]{./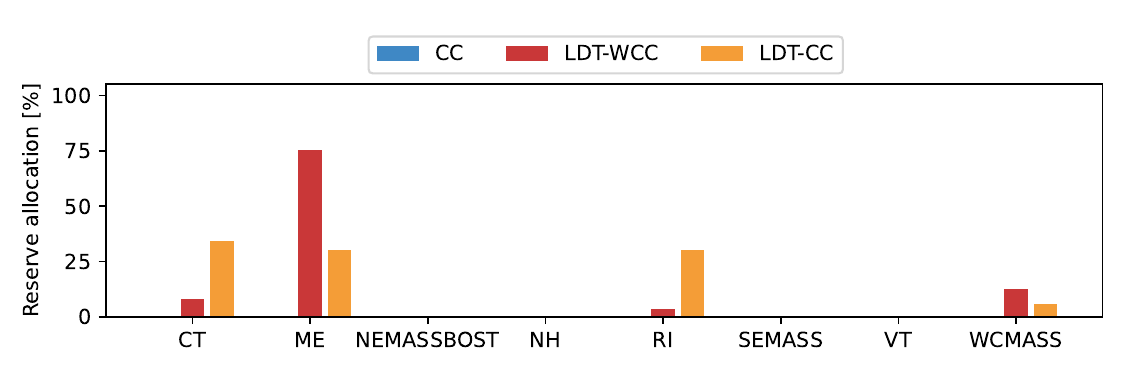}
         \caption{Extreme reserve}
         \smallskip
         \label{fig:F3c}
     \end{subfigure}
        \caption{Energy dispatch, regular and extreme reserve comparison.}
        \label{fig:extendedplots}
\end{figure}
Table \ref{Tab:StudyCases2Dual_p} compares the energy and reserve prices. All three formulations yield the same energy prices, matching the dispatch outcomes in Fig. \ref{fig:F3a}. LDT-WCC increases the regular reserve price by 11$\%$ relative to CC, while  LDT-CC raises the regular reserve price by 48$\%$. Compared to LDT-WCC, LDT-CC results in an 800$\%$ increase in the extreme reserve price. Table \ref{Tab:StudyCases1Profit2} summarizes the optimal revenues, total costs, and profits obtained for the outcomes detailed in Table \ref{Tab:StudyCases2Dual_p} and Fig. \ref{fig:extendedplots}. We observe that the difference between LDT-WCC and LDT-CC is reflected in the total cost in ME, but not in other zones. This difference also also results in a higher total profit for all zones, attributed to the increase in energy and reserve prices in the more conservative models. 
\begin{table} [!t]
\centering
\caption{Optimal Dual Results $\pi^*~[\$$/MW], $\rho^*$ $\&~\chi^*$ $[\$$/$\%$] }  \label{Tab:StudyCases2Dual_p} 
\begin{tabular}{ p{1.1cm}||p{1.6cm}||p{1.1cm}|p{1.3cm}|p{1.1cm}}
\hline
\multirow{2}{*} {Product} & \multirow{2}{*} {Price} & \multicolumn{3}{c}{Model} \\
 & &  CC & LDT-WCC & LDT-CC\\
\hline
\hline
\multirow{8}{*}{Energy}
& $\pi^*_{CT}$      & 58.23   & 58.26  &  58.33\\
& $\pi^*_{ME}$      & 135.96  & 136.01 &  136.05\\
& $\pi^*_{NEMASSB}$ & 162.62  & 162.66 &  162.68 \\
& $\pi^*_{NH}$      & 135.96  & 136.01 &  136.05 \\
& $\pi^*_{RI}$      & 33.01   & 33.08  &  33.20 \\
& $\pi^*_{SEMASS}$  & 259.89  & 259.92 &  259.85 \\
& $\pi^*_{VT}$      & 123.65  & 123.72 &  123.76 \\
& $\pi^*_{WCMASS}$  & 105.20  & 105.27 &  105.32 \\
\hline
Regular & \multirow{2}{*}{$\rho^*$}  & \multirow{2}{*}{713.93}  & \multirow{2}{*}{792.65} & \multirow{2}{*}{1056.58}\\
Reserve & & & & \\
\hline
Extreme & \multirow{2}{*}{$\chi^*$} & \multirow{2}{*}{-}  & \multirow{2}{*}{131.34} & \multirow{2}{*}{1182.41} \\
Reserve & & & & \\
\hline
\end{tabular}
\end{table}

\begin{table}[!t]
\centering
\caption{Optimal Daily Revenue, Cost, and Profit (in \$)} \label{Tab:StudyCases1Profit2}
\resizebox{0.85\linewidth}{!}{%
\begin{tabular}{ l||l|l|l|l}
 \hline
 \multirow{2}{*}{Zone} & \multirow{2}{*}{Metric} & \multicolumn{3}{c}{Model} \\
 & & CC & LDT-WCC & LDT-CC \\
 \hline
  \hline
 \multirow{3}{*}{CT} & Revenue & 113822 & 113914 & 114612 \\
 & Cost & 79641 & 79671 & 79837 \\
 & Profit & 34181 & 34243 & 34776 \\
 \hline
 \multirow{3}{*}{ME} & Revenue & 231180 & 231433 & 231788 \\
 & Cost & 32280 & 32414 & 32302 \\
 & Profit & 198900 & 199019 & 199485 \\
 \hline
 \multirow{3}{*}{NEMASSB} & Revenue & 0 & 0 & 0 \\
 & Cost & 0 & 0 & 0 \\
 & Profit & 0 & 0 & 0 \\
 \hline
 \multirow{3}{*}{NH} & Revenue & 212810 & 212909 & 212962 \\
 & Cost & 47055 & 47055 & 47055 \\
 & Profit & 165764 & 165854 & 165907 \\
 \hline
 \multirow{3}{*}{RI} & Revenue & 57849 & 57982 & 58614 \\
 & Cost & 41939 & 41960 & 42075 \\
 & Profit & 15910 & 16022 & 16539 \\
 \hline
 \multirow{3}{*}{SEMASS} & Revenue & 538071 & 538127 & 537990 \\
 & Cost & 229079 & 229080 & 229079 \\
 & Profit & 308991 & 309047 & 308910 \\
 \hline
 \multirow{3}{*}{VT} & Revenue & 53683 & 53710 & 53728 \\
 & Cost & 5291 & 5294 & 5294 \\
 & Profit & 48389 & 48416 & 48435 \\
 \hline
 \multirow{3}{*}{WCMASS} & Revenue & 30019 & 30047 & 30184 \\
 & Cost & 9084 & 9099 & 9105 \\
 & Profit & 20935 & 20948 & 21079 \\
 \hline
\end{tabular}}
\end{table}

To assess adaptability of the market outcomes under each formulation, we compare the cost performance across 3000 wind scenarios in Fig. \ref{fig:enter-label-4}. The blue bar represents the total cost of the scheduled operation, while the orange bar represents the average total cost of the 3000 out-sample scenarios. The red line indicates the standard deviation. We can observe that CC has the highest expected cost ($2.30$ million $\$$) with a standard deviation of $1.72$ million $\$$. This high cost is because CC is incomplete relative extreme deviations, which results in insufficient reserve procurement and, consequently, unserved energy. In comparison to CC, LDT-WCC and LDT-CC have expected costs that are $47\%$ and $26\%$ lower with standard deviation of $0.43$ and $0.83$ million $\$$, respectively. Thus, LDT-WCC and LDT-CC reduce the exposure to extreme event realizations more effectively than the CC benchmark.
\begin{figure}[!t]
    \centering
    \includegraphics[width=0.45\textwidth]{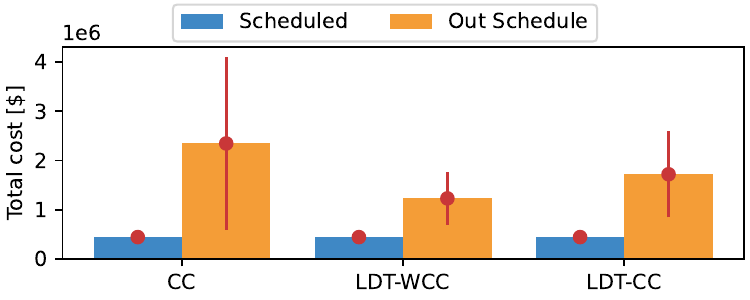}
    \caption{Expected and standard deviation cost performance for the 3000 scenarios in the ISO New England system.}
    \label{fig:enter-label-4}
\end{figure}
\section{Conclusion and Future Work} \label{Sec:Conclusion}

This paper proposes mathematical models that effectively account for the risk of extreme events in power system day-ahead scheduling and introduces a market design that co-optimizes the procurement and pricing of energy, regular reserves, and extreme reserves. First, we propose the LDT-CC-ED model for extreme reserve scheduling and then reformulate it into a single-level optimization problem that can be solved by commercial solvers. Additionally, recognizing that extreme reserve scheduling based on LDT-CC-ED may be overly conservative, leading to surging operational costs, we propose a more flexible LDT-WCC-ED model, offering system operators the chance to balance reliability and cost. The LDT-WCC-ED model can be efficiently solved using the proposed cutting-plane algorithm, ensuring its practicality in real systems. Finally, we derive the marginal prices of energy, regular reserves, and extreme reserves, demonstrating key market properties such as competitive equilibrium, cost recovery, and revenue adequacy, further supporting the applicability of the proposed market design.

Nevertheless, we must acknowledge that there remains a gap between the proposed market design and the current market operation. This is partly because most existing markets are based on deterministic models rather than stochastic optimization, such as the chance-constrained models. Additionally, to fully demonstrate the advantages of the proposed model through actual market performance, it would need to be implemented in the real system over an extended period. The benefits of scheduling extreme reserve using the proposed model only become evident when extreme events occur, and due to the rarity and unpredictability of such events, it is difficult to show significant cost savings by comparing operating costs with benchmark costs in a short timeframe.

Our future work on this research topic will focus on extending the proposed pricing theory to multi-period and security-constrained market-clearing tools, as well as analyzing multi-period cost recovery and revenue adequacy properties. This could involve, for example, developing  mixed-integer second-order conic or copositive programs that would require additional approximation methods to solve. Parallel work will include correlation analysis between different locations of uncertainty sources and the development of market designs  ensuring a competitive equilibrium.

\section*{Acknowledgement}
This publication is based upon work supported by the King Abdullah University of Science and Technology under Award \#ORFS-2022-CRG11-5021, US Department of Energy Advanced Research Projects Agency–Energy under Grant\# DEAR0001300, and the National Science Foundation under Award \#OISE 2330450.

\bibliographystyle{ieeetr} 
\bibliography{bio}

\end{document}